\documentclass[conference]{IEEEtran}
\IEEEoverridecommandlockouts

\usepackage{cite}
\usepackage{amsmath,amssymb,amsfonts}
\usepackage{algorithmic}
\usepackage{graphicx}
\usepackage{textcomp}
\usepackage{xcolor}

\usepackage{breqn}

\usepackage{amsthm}
\usepackage{algorithm}
\usepackage{mathtools}
\usepackage{multirow}
\usepackage{color}
\usepackage{subfigure}
\usepackage{placeins}
\usepackage{hyperref}

\newtheorem{theorem}{Theorem}

\newtheorem{cor}{Corollary}
\newtheorem{definition}{Definition}
\newtheorem{example}{Example}

\newcommand{\X}{{\mathcal{X}}}
\newcommand{\U}{{\mathcal{U}}}
\newcommand{\K}{{\mathcal{K}}}
\newcommand{\cS}{{\mathcal{S}}}

\def\BibTeX{{\rm B\kern-.05em{\sc i\kern-.025em b}\kern-.08em
		T\kern-.1667em\lower.7ex\hbox{E}\kern-.125emX}}
\begin{document}
	
	\title{A Robust Optimization Approach for Regenerator Placement in Fault-Tolerant Networks Under Discrete Cost Uncertainty\\
	}
	
	\author{
		\IEEEauthorblockN{Mohammad Khosravi and Setareh Maghsudi}
		\IEEEauthorblockA{Learning Technical Systems, Ruhr University of Bochum, Bochum, Germany}
	}

	\maketitle
	\begin{abstract}
		We focus on robust, survivable communication networks, where network links and nodes are affected by an uncertainty set. In this sense, any network links might fail. Besides, a signal can only travel a maximum distance before its quality falls below a certain threshold, necessitating its regeneration by regenerators installed at network nodes. In addition, the price of installing and maintaining regenerators belongs to a discrete uncertainty set. Robust optimization seeks a solution with guaranteed performance against all scenarios modeled in an uncertainty set. Thus, the problem is to find a subset of nodes with minimum cost for the placement of the regenerator, ensuring that all nodes can communicate even if a subset of network links fails. To solve the problem optimally, we propose two solution approaches, including one flow-based and one cut-based integer programming formulation, as well as their iterative exact method. Our theoretical and experimental results show the effectiveness of our methods.
	\end{abstract}
	\begin{IEEEkeywords}
		Survivable networks, robust optimization, regenerator placement, integer programming
	\end{IEEEkeywords}
	\section{Introduction}
	\label{sec:introduction}
	Modern societies have been operating and expanding through networks in all forms ranging from global communications to supporting critical services. As the impact of networks increases in our lives, the importance of ensuring their integrated functionalities and resilience becomes undeniable. \textbf{Network resilience} is defined as the ability of a network to remain operational under critical circumstances. In other words, the resilience of a network refers to its ability to maintain its functionality in the face of potential disruptions so that it can survive, recover, and adapt to disruptions. Thus, survivability is one of the most critical steps of modern network design, as it involves strategies to ensure that networks can operate during and after specific critical circumstances. 
	
	We focus on the resilient design of communication networks based on the maximum distance $d_{max}$ that data/signals can travel between network nodes without a significant degradation in quality. That is, the quality of a signal falls below an acceptable threshold if it travels over a path with a length greater than $d_{\max}$. Besides, we allow the network to be faulty, that is, the transmission links, represented by edges in the network graph, are subject to failure. Then, one can manage faults by reserving backup resources, known as the \textit{protection scheme}, or by finding available spare resources in real time, referred to as the \textit{restoration scheme} \cite{li2020branch}. 
	
	To navigate into a solution, we first note that in such a network, a signal traversing longer than $d_{\max}$ must be regenerated, using a regenerator deployed at some network nodes. However, regenerators are expensive; as such, a large body of research focuses on minimizing their usage while satisfying communication requirements. That challenge yields the \textit{regenerator location problem} (RLP) (also known as the \textit{regenerator placement problem} (RPP)), which seeks the best, that is, the most efficient location(s) for deploying regenerator(s) in the network. This seminal version of RLP appeared in \cite{sen2008sparse}: Given a network and $d_{max}$, the objective is to find the fewest number of network nodes to place regenerators so that every node pair can communicate through a path in which every subpath longer than $d_{max}$ has a regenerator on all its internal nodes \cite{sen2008sparse}. To also ensure that the network is fault-tolerant, \cite{rahman2014optimal} and \cite{li2020branch} formulate and solve the fault-tolerant regenerator location problem (FTRLP).
	
	Often, the cost of installing and maintaining regenerator equipment varies among nodes, for example, due to their geographical sites. That gives rise to the weighted version of FTRLP \cite{chen2010regenerator}. In this paper, we define the \textbf{robust version of weighted FTRLP}. More precisely, we allow uncertain deployment and maintenance price for the regenerator equipment over different network nodes, assuming that all values belong to a specific uncertainty set, as described in the following definition. 
	\begin{definition} 
		Consider a communication network where any subset of links might fail, and the maximum distance for signal transmission is $d_{\max}$. The robust FTRLP aims to determine the minimum-cost subset of network nodes for regenerator placement. The placement must ensure a path for every pair of network nodes, such that no path segment without any internal regenerator exceeds the maximum distance $d_{\max}$ for any possible link failure.
	\end{definition}
	\subsection{Application Areas}
	\label{subsec:application}
	\subsubsection*{Optical Networks}
	\label{subsec:optical}
	Optical networks are essential to support the enormous data demand, as they enable high-speed transmission over long distances by offering high bandwidth and low latency. However, maintaining signal quality over long distances requires using optical regenerators or amplifiers, which restore signal strength and prevent degradation. Thus, RLP is crucial in optical network design, as the location of regenerators directly impacts overall network performance.
	\subsubsection*{Corporate Communication Networks}
	\label{subsec:corporate}
	A modern corporate network is a communication system in which all operational parts of a company merge into a single team (regardless of their location) to ensure the effective functioning of business processes. The fault-tolerant regenerator location problem directly connects to ensuring the resilience of corporate communication networks, where reliable network infrastructure is essential for both internal communication and secure data transfer over long distances. 
	\subsubsection*{Relay-Aided Communication Networks}
	\label{subsec:relay}
	By dynamically extending network reach, relays remarkably contribute to network resilience, ensuring communication remains operational when some infrastructures fail. This is directly related to the FTRLP, as both require strategic placement of critical equipment (relays) to enhance network performance. In both cases, optimizing the placement ensures safe communication, reducing bottlenecks and supporting the network survivability under critical conditions.
	\subsection{Related Works}
	\label{subsec:relatedworks}
	In \cite{chen2010regenerator,sen2008sparse}, the complexity of the RLP is proved to be NP-complete. As RLP is a special case of the FTRLP where all links remain stable, FTRLP is also NP-complete.
	
	Reference \cite{yildiz2015regenerator} studies two types of network resilience under partial and full survivability conditions. The former refers to the resilience against regenerator failure (RLPRF) while the latter corresponds to the resilience against node failure. The results show that RLPNF is slightly harder to solve than RLPRF in terms of solution time. While the RLP in flexible networks is studied in \cite{yildiz2017regenerator}, the authors in \cite{chen2015generalized} investigated a generalized version called GRLP, where the objective is to enable communication between a given subset of nodes by installing regenerators on another specific subset of network nodes.
	
	Several studies investigate regenerator placement as part of joint optimization problems in elastic optical networks. In \cite{gonzalez2021regeneration}, a binary integer programming model is proposed to jointly solve regeneration placement, routing, modulation level, and spectrum assignment (RP--RMLSA), showing that optimal solutions often concentrate regeneration at a single node depending on the network topology and cost structure. Similarly, \cite{wang2015impact} develops a link-based MILP formulation to evaluate the impact of signal regeneration, wavelength conversion, and modulation conversion on spectrum usage in EONs, and proposes a recursive solution method to balance optimality and computational complexity. While these works provide important insights into joint design and resource allocation in elastic optical networks, they consider deterministic settings and do not address fault tolerance or robustness under failures and uncertainty.
	
	Only a small number of papers investigate FTRLP. In \cite{rahman2014optimal}, the authors introduce an integer programming formulation with an exponential number of set-covering inequalities. Reference \cite{li2020branch} develops a flow-based formulation for all possible fault scenarios which depends on the number of edges in the network. Both papers use branching approaches to solve the formulated problems: The former uses a branch-and-cut (B$\&$C) approach and the latter implements a branch-and-Benders-cut (BBC) approach inspired by Benders decomposition \cite{benders2005partitioning}. Their numerical results show that the solution provided by \cite{li2020branch} is considerably better than that by \cite{rahman2014optimal}. Regardless, by increasing the size of instances, both methods become provably time-consuming.
	
	Considering the RLP under uncertainty, there exists even less studies in the literature. For instance, \cite{yan2018robust} considers a greenfield scenario, where all source-destination pairs are known but their transmission rates change over time. The authors then present a regenerator assignment strategy based on the probability of requiring a regenerator at a given node. 
	\subsection{Our Contribution}
	\label{subsec:contribution}
	In summary, the contributions of this paper are as follows.
	\begin{itemize}
		\item We consider fault-tolerant networks, where the network edges are uncertain and may fail. While we focus on cases where only one edge can fail, our main contribution lies in the extension of our method where multiple edges may fail.
		\item We concentrate on node-weighted networks, where the cost of installing and maintaining a regenerator varies across nodes. As such, our main contribution is the introduction of a robust version of the problem under discrete uncertainty sets. The uncertainty set includes the potential realizations of prices over each node.
		\item We introduce a constraint called \emph{Full Recovery of Equality} that allows us to generate a single transformation graph to implement our integer programming formulations which decreases the pre-processing time needed.
		\item We propose IP-FB and IP-CB, two novel mathematical formulations to solve the problem exactly. Moreover, we propose an iterative method that is efficient for large-scale instances.
	\end{itemize}
	\subsection{Organization}
	\label{subsec:organization}
	The remainder of this paper is organized as follows. Section~\ref{subsec:notations} provides an overview of the notation used throughout the paper. In Section~\ref{sec:problem-statement}, we formally define the RLP, introduce the communication graph, and present problems equivalent to the RLP. Section~\ref{sec:robust-counterpart} investigates the robust version of the RLP, examines its computational complexity, and introduces its equivalent problem. In Section~\ref{sec:mathematical-models}, we outline the steps for effectively solving the robust problem by proposing a preprocessing approach, followed by two mathematical formulations and an iterative method. Section~\ref{sec:experiments} summarizes the results of our numerical experiments. Finally, Section~\ref{sec:conclusion} concludes the paper. Additionally, we extend our constraints and solution approaches to the case of multiple edge failures in Appendices~\ref{app:extension-fre}, and \ref{app:extension}.
	
	\subsection{Notations}
	\label{subsec:notations}
	In this paper, bold lowercase letters are used to denote vectors. We define the notation $[m]=\{1,2,\ldots,m\}$ to represent the set of the first $m$ positive integers. The cost vector is denoted by $\pmb{c}$, the underlying problem structure is represented by $\X$, and the set of given uncertain parameters is denoted by $\U$. We also provide a complete list of widely used abbreviations in this paper summarized in Table~\ref{table:abbreviations}.
	\vspace{-0.2cm}
	\begin{table}[htbp]
		\begin{center}
			\caption{List of abbreviations}
			\begin{tabular}{|c|c|}
				\hline
				Abbreviation & Full Term \\
				\hline
				RLP & Regenerator Location Problem \\
				FTRLP & Fault-Tolerant RLP \\
				RFTRLP & Robust Fault-Tolerant RLP \\
				MLSTP & Maximum Leaf Spanning Tree Problem\\
				MCDSP & Minimum Connected Dominating Set Problem \\
				FTMCDSP & Fault-Tolerant MCDSP \\
				RFTMCDSP & Robust Fault-Tolerant MCDSP \\
				FRE & Full Recovery of Equality \\
				\hline
			\end{tabular}\label{table:abbreviations}
		\end{center}
	\end{table}
	
	\section{Problem Statement}
	\label{sec:problem-statement}
	Consider a network $G=$($V,E,D$), where $V$ is the set of nodes with $\lvert V\rvert=n$, $E$ is the set of edges with $\lvert E\rvert=m$, and $D$ is the corresponding distance matrix of edges. The maximum distance is $d_{max}$. Any travel distance exceeding $d_{max}$ implies that the signal must be regenerated at one or more intermediate nodes along the path.
	
	In addition, it must be noticed that edges with length more than $d_{max}$ in networks (if any) cannot be used for signal transmission, as doing so would result in signal failure. Hence, from this point onward, we assume that all edges of networks have length less than or equal to $d_{max}$. For a visual illustration of the problem, refer to Figure~\ref{fig:example-network}.
	\vspace{-0.2cm}
	\begin{figure}[htbp]
		\begin{center}
			\includegraphics[width=0.45\textwidth]{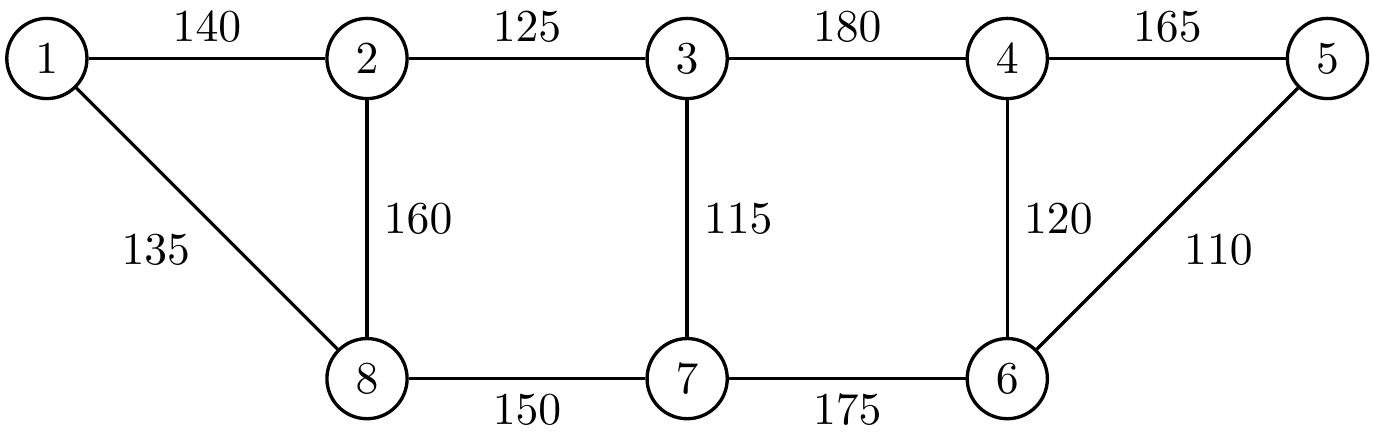}
		\end{center}
		\caption{Example of an eight-node network with $d_{\max}=200$. 
			The optimal solutions of the RLP are either nodes $\{2,3,4\}$ or nodes $\{6,7,8\}$. Considering $\{2,3,4\}$ as the solution, nodes~1 and~5 can communicate via the path $1\leftrightarrow2\leftrightarrow3\leftrightarrow4\leftrightarrow5$. Using this placement, a valid communication path can be established for all non-adjacent node pairs in the network.}
		\label{fig:example-network}
	\end{figure}
	\subsection{The Communication Graph}
	\label{subsec:communicationg-graph}
	We now describe the graph transformation approach that we use on the RLP, using a similar method introduced in \cite{sen2008sparse}. This transformation method is necessary for the implementation of both our exact solution and iterative methods.
	
	Given graph $G$ and the maximum distance of $d_{max}$, we first apply the all-pair shortest path algorithm on graph $G$. We keep an edge $(i,j)$ only if the shortest distance between node $i$ and $j$ in $G$ is less than or equal to $d_{max}$, and all such edges form $E_M$. We refer to the remaining graph as the \textit{transformed graph} $M=$($V,E_M$). Clearly, if $M$ is a complete graph, all pair of nodes in the RLP instance can communicate without any regenerator; Otherwise, communication between any pair of nodes that is not connected by an edge in $M$, called \textit{NDC pairs}, is only possible through a path with regenerators on all its internal nodes. Therefore, the RLP can be equivalently restated on graph $M$ as follows.
	\begin{definition}[RLP on M]
		Determine a subset of nodes with minimum cardinality for regenerator equipment such that every NDC pair of nodes in $M$ can communicate through a path with regenerators located on its all internal nodes.
	\end{definition}
	\subsection{Equivalent Problems to RLP}
	\label{subsec:equivalent-problems-rlp}
	In this section, we introduce problems that are equivalent to the RLP, which are essential for developing the mathematical formulations presented in Section~\ref{sec:mathematical-models}.
	\begin{definition}[Dominating Set Problem]
		Given an undirected graph $G^{\prime}=$($V^{\prime},E^{\prime}$), a subset $T\subseteq V^{\prime}$ is a dominating set if every vertex $v^{\prime}\in V^{\prime}\backslash T$ is joined to at least one node of $T$ by an edge in $E^{\prime}$.
	\end{definition}
	A connected dominating set is a dominating set $T$ such that the subgraph $G^{\prime}_T=$($T$,$E^{\prime}_T$) is connected, where $E^{\prime}_T$ is the set of edges of $E^{\prime}$ with both endpoints in $T$. Moreover, the problem of finding a connected dominating set with the minimum cardinality is referred to as the \textit{minimum connected dominating set problem} (MCDSP).
	It is proved in \cite{chen2010regenerator} that the RLP is equivalent to the \textit{maximum leaf spanning tree problem} (MLSTP). Moreover, it is easy to see, as observed in \cite{gendron2014benders} and \cite{lucena2010reformulations}, that the MLSTP is equivalent to the MCDSP and thus all three problems of RLP, MLSTP and MCDSP are equivalent. However, it must be noticed that the equivalency between these problems only holds on the communication graph $M$ and not the original graph $G$. Consequently, an equivalent definition of the RLP (based on the concept of MCDSP that is used to form our mathematical formulations in Section~\ref{sec:mathematical-models}) on $M$ can be described as follows.
	\begin{definition}[Equivalent RLP on $M$]
		Determine a subset of nodes with minimum cardinality for regenerator placement such that the subgraph induced by this subset of nodes is connected, and every node which is not included in this subset is connected to at least one of its nodes.
	\end{definition}
	\section{Robust Counterpart}
	\label{sec:robust-counterpart}
	Now we focus on the robust version of the RLP affected by two types of uncertainty. The first source of uncertainty is the cost of deploying a regenerator at each node, which varies over the network. We assume that all costs are given through a discrete uncertainty set of the form
	\[\U_V = \{\pmb{c}^1,\ldots,\pmb{c}^N\}\]
	where $\pmb{c}^i\in \mathbb{R}^n_+$ for all $i\in[N]$. Therefore, the robust counterpart of the RLP yields
	\[\min_{\pmb{x}\in\X} \max_{\pmb{c}\in\U_V} \; \pmb{c}^\intercal \pmb{x},\] 
	where $\X$ returns the structure of the RLP. 
	
	For both the nominal and robust RLP under $\U_V$, we find a binary solution $\pmb{x}$ that represents the regenerators locations. However, the objective of the nominal problem is to minimize the cardinality of the solution, whereas for the robust version under $\U_V$, the desire is to minimize the total cost of deploying regenerators. In other words, if the nominal version yields multiple subsets of nodes with the same cardinality, the choice among them is inconsequential. However, in the robust counterpart under $\U_V$, different subsets of the same size may incur different costs. More importantly, the robust solution with respect to $\U_V$ might have a higher cardinality than the nominal solution but with lower cost; nevertheless, as the discrete uncertainty only affects the objective function of the problem, the state-of-the-art mathematical formulations remain applicable with modification of the objective function.
	
	The second source of uncertainty corresponds to the edges. We assume that $\Gamma$ edges might be disconnected by the nature as an adversary, or due to adversarial attacks. Thus, we can formulate this uncertainty as a new version of the budgeted interdiction uncertainty introduced in \cite{goerigk2024robust}. Thus,  it can be written as
	\[\U_E=\{\tilde{\pmb{c}}\in\{0,1\}^m \colon \sum_{i\in[m]} \tilde{c}_i e_i \le \Gamma\},\]
	where $e_i$ is an edge of the graph for all $i\in [m]$. In this sense, the uncertainty set $\U_E$ is all the possible attacks of the adversary, and it affects the structure of the RLP. Consequently, the robust fault-tolerant formulation of the RLP (RFTRLP) is given below.
	\[\min_{\pmb{x}\in\X_{\U_E}} \max_{\pmb{c}\in\U_V} \; \pmb{c}^\intercal \pmb{x}\]
	
	In words, in RFTRLP, we seek a solution that guarantees at least a backup route between all pair of nodes, which remains unaffected if $\Gamma$ edges of the network fail. In this paper, we only focus on the case where $\Gamma=1$, but our methods introduced in Section~\ref{sec:mathematical-models} can be extended to larger $\Gamma$ values (see Appendix~\ref{app:extension}).
	
	First, we need to investigate the network to ensure that the RFTRLP has a solution.
	\begin{theorem}\label{the:graph-connectivity}
		The RFTRLP for $\Gamma$ edge failure has a solution if $G$ is a ($\Gamma+1$)-edge-connected graph.
	\end{theorem}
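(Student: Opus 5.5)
The plan is to exhibit an explicit feasible solution: place a regenerator at every node, i.e.\ take $\pmb{x}=\mathbf{1}$. Since the uncertainty in $\U_V$ only affects the objective, feasibility is determined entirely by the structure $\X_{\U_E}$. Moreover, each cost vector is finite, so once $\X_{\U_E}\neq\emptyset$, the min--max problem is over a finite nonempty set of binary vectors and an optimum exists. It therefore suffices to show that $\pmb{x}=\mathbf{1}$ satisfies the fault-tolerance requirement for every failure pattern $\tilde{\pmb{c}}\in\U_E$.

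\textbf{Step 1: surviving graph stays connected.} Fix an adversarial scenario, namely a set $F\subseteq E$ with $|F|\le\Gamma$ of failed edges. Because $G$ is $(\Gamma+1)$-edge-connected, removing any $\Gamma$ edges leaves it connected, by the definition of edge connectivity (equivalently, Menger's theorem gives $\Gamma+1$ edge-disjoint paths between any pair, and at most $\Gamma$ of them are hit). Hence for every pair $u,v\in V$ there is a $u$--$v$ path $P$ in $G-F$.

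\textbf{Step 2: every segment is short.} By the standing assumption of Section~\ref{sec:problem-statement}, every edge of $G$ has length at most $d_{max}$. Along $P$ every internal node carries a regenerator, so each regenerator-free segment of $P$ is a single edge, of length at most $d_{max}$. Thus $u$ and $v$ can communicate in the sense of the definition of robust FTRLP.

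\textbf{Step 3: conclude feasibility.} Since $F$ and the pair $(u,v)$ were arbitrary, $\mathbf{1}\in\X_{\U_E}$, and the problem has a solution. One could equivalently phrase this on $M$: $G-F$ connected implies that the transformed graph of $G-F$ is connected, so $V$ is a connected dominating set in it.

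The main obstacle is not technical but definitional. I must make sure that $\X_{\U_E}$ is interpreted as requiring communication in $G-F$ for every $F$ with $|F|\le\Gamma$. I also need the edge-length assumption ($\le d_{max}$), because edges longer than $d_{max}$ are discarded. If such edges existed, the relevant connectivity hypothesis would have to apply to the graph of usable edges, and I would state that explicitly.
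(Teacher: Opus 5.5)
Your proof is correct and follows the same route as the paper. In both, $(\Gamma+1)$-edge-connectivity (equivalently, Menger's theorem) guarantees that every pair of nodes stays joined by a path after any $\Gamma$ edges are removed. You also spell out the step the paper leaves implicit: with a regenerator at every node ($\pmb{x}=\mathbf{1}$) and every edge of length at most $d_{max}$, each surviving path is a valid communication path, so the feasible set is nonempty and an optimum exists.
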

	\begin{proof}
		If graph $G$ is a ($\Gamma+1$)-edge-connected, then there exist $\Gamma+1$ edge-disjoint paths between any pair of nodes, and thus by removing $\Gamma$ edges from the graph, at least one path remains to connect any pair of nodes in the graph. Consequently the problem has a solution.
	\end{proof}
	
	Therefore, for $\Gamma=1$, graph $G$ must be a two-edge-connected graph. Now, we can formally define the RFTRLP for a single edge failure.
	\begin{definition}
		Determine a subset of nodes with minimum cost such that there are at least two edge-disjoint paths between every NDC pair of nodes ($s$,$t$) in $G$, where there exists no sub-path in these two paths with length larger than or equal to $d_{max}$ without internal regenerators.
	\end{definition}

	\subsection{Problem Complexity}
	
	The computational complexity of the RLP has been established in \cite{chen2010regenerator} through a step-by-step reduction to the Vertex Cover problem.
	\begin{theorem}
		The regenerator location problem is NP-complete.
	\end{theorem}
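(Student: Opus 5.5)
We prove NP-completeness of the decision version: given $G=(V,E,D)$, $d_{\max}$, and an integer $k$, is there a set $R\subseteq V$ with $|R|\le k$ such that every NDC pair communicates through a path whose regenerator-free segments have length at most $d_{\max}$? The plan has two parts, membership in NP and NP-hardness, and relies on the equivalence from Section~\ref{subsec:equivalent-problems-rlp}: on the communication graph $M$, the RLP coincides with the MCDSP.

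\emph{Membership in NP.} The certificate is $R$ itself. A verifier runs an all-pairs shortest path algorithm to build $M=(V,E_M)$ in polynomial time. By the equivalence on $M$, it then only needs to check two things: that $M[R]$ is connected, and that every vertex of $V\setminus R$ has a neighbour in $R$. Both checks take polynomial time.

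\emph{NP-hardness.} I reduce from the decision version of the minimum connected dominating set problem, which is classically NP-complete; it is equivalent to Maximum Leaf Spanning Tree, which is in Garey--Johnson. Alternatively, following \cite{chen2010regenerator}, one can start from Vertex Cover, since CDS hardness itself follows from Vertex Cover.

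Given a connected graph $H=(V_H,E_H)$ and an integer $k$, I build an RLP instance as follows:
\begin{itemize}
\item take $G=H$, with every edge of length $1$;
\item set $d_{\max}=1$.
\end{itemize}
The shortest-path distance between $i$ and $j$ is at most $1$ exactly when $ij\in E_H$, so $E_M=E_H$ and hence $M=H$. By the RLP--MCDSP equivalence on $M$, $H$ has a connected dominating set of size at most $k$ if and only if the RLP instance has a regenerator set of size at most $k$.

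One degenerate case needs care. If $H$ is complete, there are no NDC pairs, so the empty set is feasible for the RLP, whereas a connected dominating set needs one vertex. I handle this by outputting a trivial yes-instance whenever $H$ is complete and $k\ge 1$.

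\emph{Main obstacle.} The step needing most care is the correctness of the equivalence itself: a feasible regenerator set must be a connected dominating set in $M$, and conversely. Since the paper cites this from \cite{chen2010regenerator}, I will invoke it rather than reprove it. If a self-contained argument is wanted, I would give it in two directions.
\begin{itemize}
\item \emph{CDS $\Rightarrow$ regenerator set.} Any NDC pair $s,t$ can route $s\to r_s\leadsto r_t\to t$ inside $R$, where $r_s,r_t\in R$ are neighbours of $s$ and $t$.
\item \emph{Regenerator set $\Rightarrow$ CDS.} Suppose some vertex $v$ is undominated, i.e.\ has no neighbour in $R$. A non-neighbour of $v$ exists because $H$ is not complete, and any path to it starts with an edge $vu$ where $u$ is an internal vertex, so $u\in R$, a contradiction. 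Connectivity of $M[R]$ is then shown by routing between non-adjacent vertices dominated from different components of $M[R]$; this is the fiddliest part of the argument.
\end{itemize}
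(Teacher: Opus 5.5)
Your proof is correct, but it takes a different route from the one the paper relies on. The paper gives no argument of its own and defers to \cite{chen2010regenerator}. That proof, mirrored in the paper's proof of Theorem~\ref{theorem:RWRLP}, is a gadget reduction from Vertex Cover. The original vertices form a clique, each edge $e_j$ gets two new non-adjacent vertices $u_j,w_j$ joined to both of its endpoints, and all lengths are set to $d_{\max}$. The NDC pair $(u_j,w_j)$ then forces a regenerator on an endpoint of $e_j$, and the clique makes any vertex cover feasible.

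You instead make the RLP--MCDSP equivalence on $M$ the key lemma. You note that unit lengths with $d_{\max}=1$ give $M=H$, so the RLP contains connected dominating set as a special case. Your route is shorter and more conceptual, and it also supplies NP membership, which the paper never discusses. The price is that CDS hardness is used as a black box, and the equivalence lemma, with its complete-graph exception, carries all the weight. The Vertex Cover reduction works directly from the definition of the RLP without any equivalence, and it is the version the paper adapts to weighted robust costs. The gadget graph is a split graph, so that reduction is essentially a standard hardness proof for connected domination on split graphs, written in RLP language.

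Three small repairs are needed.
\begin{itemize}
\item In your self-contained sketch, ``a non-neighbour of $v$ exists because $H$ is not complete'' is false when $v$ is a universal vertex. Instead, note that non-completeness yields an NDC pair, so $R\neq\emptyset$, and a universal $v\notin R$ is adjacent to every vertex of $R$.
\item Connectivity of $M[R]$ is simpler than you suggest. Two vertices of $R$ in different components of $M[R]$ are non-adjacent, so they form an NDC pair, and the path joining them has all its vertices in $R$.
\item A complete $H$ with $k=0$ must be mapped to a fixed no-instance. Likewise, your verifier rejects the valid empty certificate when $M$ is complete and $k=0$. Checking directly, for each NDC pair $(s,t)$, that $s$ and $t$ are connected in $M[R\cup\{s,t\}]$ avoids both the equivalence and this exception.
\end{itemize}
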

	
	The robust counterpart of the combinatorial problems under discrete uncertainty set are usually significantly more challenging to solve. For instance, the nominal selection problem can be solved in polynomial time, but its robust counterpart becomes NP-hard when the number of scenarios is fixed (even when $N=2$), and strongly NP-hard and not apprximable within any constant when the number of scenarios is part of the inputs \cite{kasperski2016robust}. 
	
	Now we can prove the complexity level of the RFTRLP. Therefore, we first use a similar approach introduced in \cite{chen2010regenerator} to reduce the robust weighted version of the RLP under discrete uncertainty to the robust weighted vertex cover problem with the same uncertainty set, which is known to be NP-hard \cite{goerigk2023optimal}.
	\begin{theorem}\label{theorem:RWRLP}
		The robust weighted regenerator location problem under discrete uncertainty set is NP-hard.
	\end{theorem}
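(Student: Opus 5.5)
We reduce from the robust weighted vertex cover problem under a discrete uncertainty set, which is NP-hard \cite{goerigk2023optimal}. An instance consists of a graph $H=(V_H,E_H)$ and scenarios $\U=\{\pmb{w}^1,\ldots,\pmb{w}^N\}$ with $\pmb{w}^i\in\mathbb{R}^{|V_H|}_+$. The task is to find a vertex cover $C$ minimizing $\max_i \sum_{v\in C} w^i_v$. As in \cite{chen2010regenerator}, we work with the equivalent MCDSP form of the RLP on the communication graph $M$ (Definition of the Equivalent RLP on $M$). Since the discrete cost uncertainty enters only the objective, it suffices to build the instance so that feasible regenerator sets correspond to vertex covers of $H$, and then copy the scenario costs onto the corresponding nodes.

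\textbf{Construction.} We build a network $G$ whose communication graph $M$ has the following nodes.
\begin{itemize}
\item A node $u_v$ for each $v\in V_H$, with cost $c^i_{u_v}=w^i_v$ in scenario $i$.
\item A node $z_e$ for each edge $e=\{a,b\}\in E_H$, adjacent in $M$ exactly to $u_a$ and $u_b$.
\item A hub node $h$ adjacent to all $u_v$, together with a pendant node $h'$ adjacent only to $h$.
\end{itemize}
The nodes $u_v$ are pairwise adjacent through $h$ but not directly. Concretely, we realize this by choosing edge lengths in $G$ equal to $d_{\max}$ on exactly the $M$-edges listed above, so that every two-hop distance exceeds $d_{\max}$ and the shortest-path closure gives exactly this $M$.

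The $z_e$ and $h'$ nodes receive a large cost $L$ in every scenario, with $L>\sum_v\max_i w^i_v$. The hub $h$ receives cost $0$ in every scenario. A simpler alternative is to give $z_e$ and $h'$ cost $L$ and to note that $h$ is forced into every solution anyway, because $h'$ must be dominated.

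\textbf{Correspondence.} We verify two directions.
\begin{itemize}
\item If $C$ is a vertex cover of $H$, then $T=\{h\}\cup\{u_v: v\in C\}$ is a connected dominating set of $M$. It is connected through $h$. It dominates $h'$ and every $u_v$ through $h$, and it dominates each $z_e$ because $C$ covers $e$. In every scenario, $T$ costs exactly $\sum_{v\in C}w^i_v$.
\item Conversely, an optimal robust solution never uses a node of cost $L$, since the first direction already gives a cheaper feasible solution. Hence each $z_e$ must be dominated by some $u_a$ or $u_b$ in $T$, so $\{v:u_v\in T\}$ is a vertex cover of the same cost profile.
\end{itemize}
The two robust optima therefore coincide. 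The reduction is polynomial, and the scenario count $N$ is preserved, so the NP-hardness for fixed $N$ or for $N$ part of the input carries over.

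\textbf{Main obstacle.} The delicate step is realizing the prescribed $M$ from an actual metric network $G$, because $M$ is the shortest-path closure of $G$. Putting every $G$-edge at length exactly $d_{\max}$ makes the closure equal to $G$ itself, which resolves this. The remaining care lies in the non-domination argument: a solution must not be able to bypass the vertex-cover condition by paying for $z_e$ nodes. The large cost $L$, or alternatively replicating each $z_e$ into many parallel copies, rules this out.
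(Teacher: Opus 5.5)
Your reduction is correct and is essentially the paper's. Both reduce from robust weighted vertex cover by copying the scenario costs onto one node per vertex, attaching per-edge gadget nodes to the two endpoints, setting every edge length to $d_{\max}$ so that $M=G$, and keeping the gadget nodes out of any needed solution. The differences are cosmetic. You get connectivity for free from a zero-cost hub $h$ instead of the paper's clique on $V$, use a single gadget $z_e$ instead of the pair $u_j,w_j$, exclude gadgets by a cost bound $L$ rather than a removal argument, and argue through the connected-dominating-set characterization rather than directly with regenerator paths. That characterization agrees with RLP-feasibility set by set here because your $M$ is connected and not complete, so the weighted objective is unaffected.
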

	\begin{proof}
		Robust weighted vertex cover problem (RWVCP): given an undirected graph $G=(V,E)$ and the discrete uncertainty set $\U=\{\pmb{c}^1,\ldots,\pmb{c}^N\}$, representing the node costs, select a subset $S$ of nodes with minimum costs such that each node $v \in V$ is either in $S$ or has a neighbor in $S$.\\
		We now construct the corresponding instance of RWRLP. Create a node for every $v\in V$. Create an edge between every pair of nodes in $V$. For every edge $e_j \in E$, create a pair of nodes $u_j$ and $w_j$. Connect both $u_j$ and $w_j$ to the endpoints of edge $e_j$. Set the length of all edges in the resulting graph to $d_{max}$. In addition, for all $i\in[N]$ fix the cost of all $u$ and $w$ nodes to $M^i+1$ where $M^i=\max \{c^i_1\ldots,c^i_n\}$.\\
		The question in this new graph is whether there is a feasible solution (a subset of nodes $L$ regenerator placement) to the RWRLP with a minimum sum of node costs. Observe that in the  RWRLP obtained from transforming an RWVCP, a feasible solution does not need to place a regenerator at $u_j$ and $w_j$ nodes. This is because the nodes in $V$ are fully connected in the graph corresponding to the RWRLP problem. Thus, a feasible solution with a regenerator at a $v_j$ or $w_j$ node remains feasible when the regenerator is removed. With this, it is easy to observe that an instance of RWVCP has a “yes” answer if and only if the corresponding RWRLP has a “yes” answer. Hence, the robust weighted regenerator location problem under a discrete uncertainty set is NP-hard.
	\end{proof}
	\begin{cor}\label{cor:RFTRLP}
		The robust fault-tolerant regenerator location problem under uncertainty sets $\U_V$ and $\U_E$ is NP-hard.
	\end{cor}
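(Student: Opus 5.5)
The plan is to reduce the robust weighted RLP under the discrete set $\U_V$, which Theorem~\ref{theorem:RWRLP} shows is NP-hard, to the RFTRLP under $\U_V$ and $\U_E$. The simplest route is to show that RWRLP is a special case of RFTRLP. First I would check the degenerate case $\Gamma=0$. Then $\U_E$ contains only the all-zero interdiction vector, so $\X_{\U_E}=\X$ and RFTRLP coincides with RWRLP. If the problem class is allowed to include $\Gamma=0$, hardness follows at once. Since the paper fixes $\Gamma=1$, I would also give a reduction that does not rely on this case.

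For $\Gamma=1$ I would reuse the gadget from the proof of Theorem~\ref{theorem:RWRLP}. Starting from an RWVCP instance, build the same graph: a complete graph on $V$, and for each edge $e_j=(a,b)$ two new nodes $u_j,w_j$, each adjacent to both $a$ and $b$. All edges get length $d_{max}$, and the cost of every $u_j,w_j$ in scenario $i$ is $M^i+1$.

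Next I would verify that this graph is two-edge-connected. Every gadget node has degree two into the clique, and the clique on $|V|\ge 3$ nodes is two-edge-connected, so by Theorem~\ref{the:graph-connectivity} the RFTRLP instance is feasible. I would then show that after deleting any single edge, a placement $L\subseteq V$ still lets every NDC pair communicate exactly when $L$ satisfies the covering condition used in Theorem~\ref{theorem:RWRLP}. The argument is as follows. Because every edge has length $d_{max}$, any two consecutive hops need a regenerator at the middle node. A gadget node $u_j$ keeps at least one of its two edges into $\{a,b\}$, and the clique minus one edge stays connected with many alternative two-hop routes. The structure therefore survives a single failure as long as each gadget node has a surviving neighbour in $L$.

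The main obstacle is that a single failure can cut $u_j$ off from the endpoint that carries the regenerator. So the fault-tolerant condition requires, roughly, both $a,b\in L$, not just one of them, and the clean equivalence with the non-robust instance breaks. I would try two fixes. The first is to reduce from RWRLP itself rather than from RWVCP: replace every edge of an RWRLP instance by two parallel paths through fresh subdivision-free copies, so that every edge failure is masked and the solution sets are unchanged. The second is to argue directly that the fault-tolerant version of the gadget encodes the robust weighted vertex cover with the same cost structure. Under either fix, the reduction is polynomial and maps each scenario's costs one-to-one. Hence a polynomial algorithm for RFTRLP would solve RWRLP, and RFTRLP is NP-hard.
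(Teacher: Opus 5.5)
Your $\Gamma=0$ remark is sound if $\Gamma$ is part of the input. The paper's own short argument does no better. It rewrites RFTRLP as robust weighted RLP feasibility on every fault-scenario graph $M_{G\setminus e}$. That maps RFTRLP \emph{into} RLP-type problems rather than embedding hard instances into RFTRLP, so only a no-failure special case like yours turns it into a hardness proof. The genuine gap is in your $\Gamma=1$ part, the case the paper actually studies. Your claim that, after any single failure, a placement $L\subseteq V$ works exactly when it satisfies the covering condition is false, as you notice yourself a paragraph later. Node $u_j$ has only the edges $(u_j,a)$ and $(u_j,b)$, and $(u_j,w_j)$ is an NDC pair at distance $2d_{\max}$. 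If $(u_j,a)$ fails, every $u_j$--$w_j$ route has $b$ as an internal node, and the stretch through $b$ has length $2d_{\max}>d_{\max}$ unless $b\in L$. Symmetrically $a\in L$; equivalently, two edge-disjoint $u_j$--$w_j$ paths must leave $u_j$ through both $a$ and $b$. So every non-isolated vertex of the vertex-cover instance is forced into $L$, the instance becomes trivial, and the gadget encodes nothing.

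Neither fix closes this. The second merely restates the goal, and the computation above refutes it for this gadget. The first works only if ``two parallel paths'' means doubling each edge by a parallel edge of equal length. Then no single failure changes any distance, $M_{G\setminus e}=M$ for every $e$, each valid route has an edge-disjoint twin, and the feasible placements coincide with those of the robust weighted RLP. However, the paper's network $G=(V,E,D)$ is given by a distance matrix, and its disjointness constraints $f^{ij}_{pq}+g^{ij}_{pq}\le 1$ are indexed by node pairs, so $G$ is a simple graph. There a parallel route needs a fresh intermediate node of degree two. By the argument above (the paper's own preprocessing observation), a single failure then forces both endpoints of the original edge into $L$, so the solution sets do change; ``subdivision-free copies'' describes no simple-graph construction.

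You need either an explicit multigraph assumption or a masking gadget without degree-two nodes. One option is the following. Replace each node $v$ by three copies joined pairwise by edges of length $\varepsilon<\tfrac12$. For each original edge $(a,b)$, join every copy of $a$ to every copy of $b$ by an edge of length $D_{ab}$. Give each copy the cost vector of $v$ and, with integer data, use the threshold $d_{\max}+\tfrac12$. Distances between different groups then equal the original ones, and every edge has a bypass only $\varepsilon$ longer, so no single failure changes the communication graph. Placements lift (one copy per node) and project (costs are nonnegative) without increasing any scenario cost. This settles the failure-based reading directly; for the edge-disjoint-paths formulation you would additionally have to route the two paths through different copies.
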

	\begin{proof}
		Using a similar technique to that introduced in \cite{li2020branch}, we can define a fault scenario by removing each edge form $G$ and then obtain the transformed graph $M$. Consequently, the problem becomes finding a solution for the robust weighted RLP that is valid over each transformed graph. Therefore, the complexity level of the RFTRLP can be proved.
	\end{proof}
	%
	\subsection{Equivalent Problems to RFTRLP}
	\label{subsec:equivalent-problems-rftrlp}
	Similar to the approach presented in Section~\ref{subsec:equivalent-problems-rlp}, we introduce an equivalent problem to the FTRLP that can be modeled explicitly using integer programming techniques. This problem preserves the structure and optimality properties of the FTRLP, and therefore any optimal solution obtained for the problem directly corresponds to an optimal solution of the FTRLP.
	\begin{definition}[Fault-Tolerant MCDSP]\label{def:FTMCDSP}
		A special case of MCDSP that remains feasible if an edge is removed from the network.
	\end{definition}
	
	We refer to the problem introduced in Definition~\ref{def:FTMCDSP} as FTMCDSP. However, the solution of FTMCDSP and FTRLP on $M$ might not remain feasible on $G$, due to the fact that two edge-disjoint paths on graph $M$ might not be edge-disjoint on $G$. To better understand it see Example~\ref{exp:inquality-on-m}. 
	\begin{example}\label{exp:inquality-on-m}
		Consider the network shown in Figure~\ref{fig:example-inequality}(a), where $d_{\max}=100$. From the corresponding graph $M$, depicted in Figure~\ref{fig:example-inequality}(b), it can be observed that the subgraph induced by nodes $\{2,3,4,5\}$ is two-edge-connected. Furthermore, nodes $\{1,6\}$ are each connected to exactly two nodes in the set $\{2,3,4,5\}$. As a result, the subset $\{2,3,4,5\}$ shows a feasible solution for both the FTRLP and the FTMCDSP on graph $M$. Since no solution of smaller cardinality exists, this subset is also optimal for the FTRLP.
		Now, suppose that edge $(1,2)$ is removed from the original graph $G$, and the corresponding graph $M_{G\setminus(1,2)}$ is constructed, as shown in Figure~\ref{fig:example-inequality}(c). In this case, communication between node~1 and nodes $\{2,3,4,5\}$ in the graph $G\setminus(1,2)$ is no longer possible. This occurs because, although node~1 is connected to two nodes in the optimal solution in graph $M$, the edge $(1,3)$ in $M$ corresponds to the path $1 \rightarrow 2 \rightarrow 3$ in $G$, which includes edge $(1,2)$. Consequently, once edge $(1,2)$ is removed from $G$, the edge $(1,3)$ cannot be constructed in the derived graph $M_{G\setminus(1,2)}$. This example illustrates the inequivalence between the solutions of the FTRLP and the FTMCDSP on graphs $G$ and $M$.
		\vspace{-0.2cm}
		\begin{figure}[htbp]
			\begin{center}
				\subfigure[Graph $G$]{
					\includegraphics[width=0.14\textwidth]{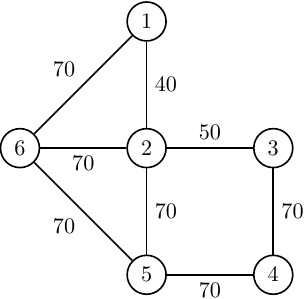}
				}\hfill
				\subfigure[Graph $M$]{
					\includegraphics[width=0.14\textwidth]{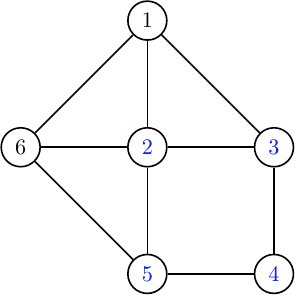}
				}\hfill
				\subfigure[Graph $M_{ G\backslash(1,2)}$]{
					\includegraphics[width=0.14\textwidth]{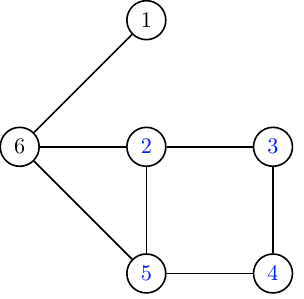}
				}
			\end{center}
			\caption{Illustration of the inequivalence between the FTRLP and FTMCDSP solutions on graphs $G$ and $M$: (a) original graph $G$, (b) corresponding graph $M$, and (c) graph $M_{G\setminus(1,2)}$ after removing edge $(1,2)$.}\label{fig:example-inequality}
		\end{figure}
	\end{example}
	Note that, in case the transformed graph $M$ remains the same as the original graph $G$, the solution of both FTRLP and FTMCDSP in $M$ remains feasible in graph $G$. In this case, the shortest path algorithm returns edges that already exist in the original graph, with possibly shorter length; That is, in construction of $M$, no new edge can be added to the original set of edges in $G$.
	
	To establish equivalence between the solutions of the FTRLP and the FTMCDSP on both graphs $G$ and $M$, an additional constraint is required. We refer to this constraint as the \emph{Full Recovery of Equality (FRE)} of the solutions on $G$ and $M$ for both the FTRLP and the FTMCDSP. To formalize this concept, we first define a new set of nodes $\overline{V}$ in the following way:
	\[
	\overline{V} = \{v \ | \ v \in V, \ \mathcal{N}_v^M > \mathcal{N}_v^G \}
	\]
	where $\mathcal{N}_v^G$ and $\mathcal{N}_v^M$ represent the number of neighbors of $v$ on $G$ and $M$, respectively.
	\begin{definition}[Full Recovery of Equality]
		Given a graph $G$ and its corresponding transformed graph $M$, the solutions of the FTRLP and the FTMCDSP on $G$ and $M$ are said to be equivalent if, for any node $v \in \overline{V}$, the number of neighbors of $v$ in $M$ that are equipped with regenerators is strictly greater than the number of equipped nodes lying on all shortest paths in $G$ that induce edges incident to $v$ in $M$.
	\end{definition}
	To justify the use of the FRE constraint, we prove in Theorem~\ref{thm:fre} that this constraint preserves the equivalence between the solutions of the FTRLP and the FTMCDSP on graphs $G$ and $M$. For further intuition on how this constraint works, Example~\ref{exp:fre} provides an illustrative scenario. In addition, an extension of the FRE constraint for cases with $\Gamma \geq 2$ is provided in Appendix~\ref{app:extension-fre}.
	\begin{theorem}\label{thm:fre}
		The solutions of the FTRLP and the FTMCDSP on graphs $G$ and $M$ are equivalent if and only if the FRE constraint holds.
	\end{theorem}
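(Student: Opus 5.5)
The plan is to prove the two directions separately, after first reducing ``equivalence of solutions on $G$ and $M$'' to a local statement. Fix a candidate regenerator set $S$ that is feasible for the FTMCDSP on $M$. Feasibility on $G$ means that for every single edge $e\in E$, the set $S$ is still a connected dominating set of $M_{G\setminus e}$. I will use two observations. First, $M_{G\setminus e}$ is a subgraph of $M$. Second, an edge $(u,v)\in E_M$ is lost in $M_{G\setminus e}$ only if $e$ lies on the shortest paths in $G$ that induce $(u,v)$, and no alternative path of length at most $d_{max}$ avoids $e$. So the only edges of $M$ that are fragile under a failure are the induced edges, i.e.\ those not already in $E$. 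These are exactly the edges incident to nodes whose degree grows from $G$ to $M$, namely the nodes of $\overline{V}$. For $v\notin\overline{V}$, the remark after Example~\ref{exp:inquality-on-m} applies: its incident edges in $M$ are original edges of $G$, so fault tolerance on $M$ carries over to $G$ unchanged.

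\emph{Sufficiency (FRE $\Rightarrow$ equivalence).} Take $v\in\overline{V}$ and a failed edge $e$. Consider the equipped $M$-neighbours of $v$ whose inducing $G$-paths pass through $e$. Every such path that reaches an equipped neighbour $w$ passes through the equipped internal nodes on it. Edge $e$ can therefore destroy the $M$-edges to at most those equipped neighbours counted as ``equipped nodes lying on the inducing shortest paths.'' FRE says the total number of equipped neighbours strictly exceeds this count, so at least one edge from $v$ to an equipped node survives in $M_{G\setminus e}$. Thus domination is preserved. Connectivity of $S$ is handled the same way: an edge between two equipped nodes that is lost is replaced by the chain of equipped internal nodes on its inducing path, whose consecutive pairs are within $d_{max}$. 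Combined with the two-edge-connectivity of $S$ in $M$ and Theorem~\ref{the:graph-connectivity}, this gives a surviving regenerated path. Hence $S$ is feasible on $G$. Since feasibility on $G$ trivially implies feasibility on $M$, the two feasible sets coincide, and so do the optimal solutions.

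\emph{Necessity.} Suppose FRE fails at some $v\in\overline{V}$. Then the equipped neighbours of $v$ are at most the equipped nodes on the inducing paths, so all of $v$'s equipped adjacencies in $M$ are routed through a bounded set of equipped intermediates. Generalising Example~\ref{exp:inquality-on-m}, I will pick the first edge $e$ of $G$ leaving $v$ on these paths, as $(1,2)$ is in the example. Removing $e$ kills every induced edge from $v$ to the equipped set without creating a new one, so $v$ is undominated in $M_{G\setminus e}$ while $S$ may still be feasible on $M$. This gives the required non-equivalence.

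The main obstacle is making the counting in the sufficiency step rigorous. FRE compares a count of neighbours with a count of nodes on paths, and one must show that a single edge failure can eliminate no more equipped adjacencies than the latter count. I would first try to do this by charging each destroyed $M$-edge $(v,w)$ to a distinct equipped node on its inducing path (for instance the one closest to $v$), using shortest-path subpath optimality. I would be prepared to sharpen the statement of FRE if that injection fails.
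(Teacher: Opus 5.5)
The domination half of your sufficiency argument goes through if you read $\mathcal{N}_v^{\prime}$ as the paper does. The two real gaps are elsewhere: connectivity of $L$ in sufficiency, and the whole of necessity. Both implications you set out to prove are false.

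\textbf{Sufficiency.} In Example~\ref{exp:fre} the paper takes $\mathcal{N}_1^{\prime}=\{2,3\}$, so endpoints of inducing paths are included. Every node on a path of length at most $d_{max}$ from $v$ is an $M$-neighbour of $v$, so $\mathcal{N}_v^{\prime}\subseteq\mathcal{N}_v$. FRE then says exactly that $v$ has an equipped neighbour $w$ lying on no inducing path. Hence $(v,w)\in E$, no inducing path from $v$ uses it, and it survives every failure except its own. When $(v,w)$ itself fails, the constraint $\sum_{j\in\mathcal{N}_v}x_j\ge 2$ supplies a second equipped neighbour whose $M$-edge avoids $(v,w)$. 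You need this case: contrary to your first paragraph, an original edge is fragile under its own failure. The injection you worry about is therefore trivial (charge $(v,w)$ to $w$).

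Your connectivity step fails. Replacing a lost $M$-edge by \emph{the chain of equipped internal nodes on its inducing path} does not work. That path contains the failed edge $e$ (which is why the $M$-edge was lost), and its internal nodes need not be equipped. Two-edge-connectivity of $L$ in $M$ and Theorem~\ref{the:graph-connectivity} do not help either, because one edge of $G$ can carry many edges of $M$; this is the point of Example~\ref{exp:inquality-on-m}. FRE only constrains the neighbourhoods of single nodes and cannot prevent this. Here is a counterexample:
\begin{itemize}
\item Take $d_{max}=300$ and two disjoint equipped cycles $A$ and $B$.
\item Add $y\notin L$ joined to non-adjacent $a_1,a_2\in A$, $z\notin L$ joined to non-adjacent $b_1,b_2\in B$, and the edge $(y,z)$; these five edges have length $100$.
\item Add $w,w'\notin L$, with $w$ joined to two nodes of $A$, $w'$ joined to two nodes of $B$, and the edge $(w,w')$. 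All remaining edges have length $250$.
\end{itemize}
Then $G$ is two-edge-connected. The set $L=A\cup B$ meets the degree constraints, is two-edge-connected in $M$, and satisfies FRE on $\overline{V}=\{a_1,a_2,b_1,b_2,y,z\}$. Yet every $M$-edge between $A$ and $B$ is induced through $(y,z)$, so $L$ is disconnected in $M_{G\setminus(y,z)}$. The paper's proof never addresses this part; it only argues that each $v\in\overline{V}$ keeps two edge-disjoint connections into $L$.

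\textbf{Necessity.} A violation of FRE only says that every equipped $M$-neighbour of $v$ lies on some inducing path. These paths may leave $v$ through different $G$-neighbours, and equipped original neighbours keep their direct edges, so no single failure need undominate $v$. Take the $5$-cycle $v,u_1,w_1,w_2,u_2$ with consecutive lengths $100,150,250,150,100$ and $d_{max}=300$. The new edges at $v$ are $(v,w_1)$ via $u_1$ and $(v,w_2)$ via $u_2$, so $\mathcal{N}_v=\mathcal{N}_v^{\prime}=\{u_1,u_2,w_1,w_2\}$ and FRE fails at $v$ for \emph{every} $L$. Yet $L=\{v,w_1,w_2\}$ meets the degree constraints, induces a triangle in $M$, and is a connected dominating set of $M_{G\setminus e}$ for each of the five edges $e$. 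So the implication \emph{FRE fails $\Rightarrow$ $L$ infeasible on $G$} is false, not merely unproved.

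The paper's own necessity argument only derives $u_2\in L$ in its two-neighbour case, which is weaker than the FRE inequality. Its claim that FRE holds trivially when $u_1,u_2\in L$ is contradicted by this example. A statement you could actually prove would either replace FRE by the non-local requirement that $L$ be a connected dominating set of every $M_{G\setminus e}$, or restrict to FRE being sufficient for the domination part only.
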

	\begin{proof}
		Let $L \subseteq V$ denote the set of nodes equipped with regenerators. Consider an arbitrary node $v \in \overline{V}$. Since $G$ is a two-edge-connected graph, node $v$ has at least two neighbors in $G$. Without loss of generality, assume that $v$ has exactly two neighbors in $G$, denoted by $u_1$ and $u_2$. Moreover, because $v \in \overline{V}$, it has additional neighbors in $M$ that do not correspond to edges in $G$. Each such neighbor of $v$ in $M$ is generated by the shortest-path algorithm and corresponds to a path in $G$ that originates at $v$ and passes through either $u_1$ or $u_2$.
		
		Sufficiency. Assume first that the FRE constraint holds. By definition, the number of neighbors of $v$ in $M$ that belong to $L$ is strictly greater than the number of nodes in $L$ lying on all shortest paths in $G$ that induce edges incident to $v$ in $M$. We now consider the possible cases.
		
		If both $u_1, u_2 \in L$, then there exist two edge-disjoint paths between $v$ and the nodes in $L$ in both graphs $G$ and $M$, and two-connectivity is immediately satisfied. Otherwise, suppose that at least one new neighbor of $v$ in $M$ is equipped with a regenerator and that the corresponding shortest path in $G$ passes through $u_1$. In this case, the FRE constraint implies that $u_2 \in L$. Consequently, there exist two edge-disjoint paths between $v$ and the nodes in $L$. Therefore, the equality between the solutions of the FTRLP and the FTMCDSP on graphs $G$ and $M$ is preserved.
		
		Necessity. Assume that the solutions of the FTRLP and the FTMCDSP on $G$ and $M$ are equal. In other words, there exist two edge-disjoint paths between $v$ and the nodes in $L$.
		
		If both $u_1, u_2 \in L$, then the FRE constraint is trivially satisfied. Otherwise, suppose that a new neighbor of $v$ in $M$ (obtained via the shortest-path algorithm and passing through $u_1$) belongs to $L$. Installing a regenerator only at $u_1$ is not sufficient in this case. Indeed, after the removal of edge $(v,u_1)$ from $G$ and the subsequent construction of the corresponding graph $M$, node $v$ remains connected to the rest of the network solely through node $u_2$. To preserve two-edge connectivity, node $u_2$ must therefore also belong to $L$. Hence, the FRE constraint necessarily holds.
	\end{proof}
	\begin{example}\label{exp:fre}
		In Figure~\ref{fig:example-inequality}, we have $\overline{V} = \{1,3\}$.
		Consequently, for node~1, the number of regenerators placed on nodes $\{2,3,6\}$ must be greater than the number placed on nodes $\{2,3\}$. Therefore, node~6 must be included in the solution subset in order to satisfy the FRE condition.
	\end{example}
	Accordingly, an equivalent definition of the RFTRLP on $M$, using the concept of FTMCDSP and FRE, is given in Definition~\ref{def:mcdsp}.
	\begin{definition}[Equivalent RFTRLP on $M$]\label{def:mcdsp}
	Determine a subset of nodes $L$ with minimum cost, such that there exist at least two edge-disjoint paths between all pair of nodes in $L$ and all nodes excluded from $L$ must be connected to at least two nodes of $L$, while satisfying the full recovery of equality of solution on $M$ and $G$.
	\end{definition}

	Finally, to clearly understand the relation between all problems we introduced in this paper see Figure~\ref{fig:flowchart}.
	\begin{figure}[htbp]
		\begin{center}
			\includegraphics[width=0.45\textwidth]{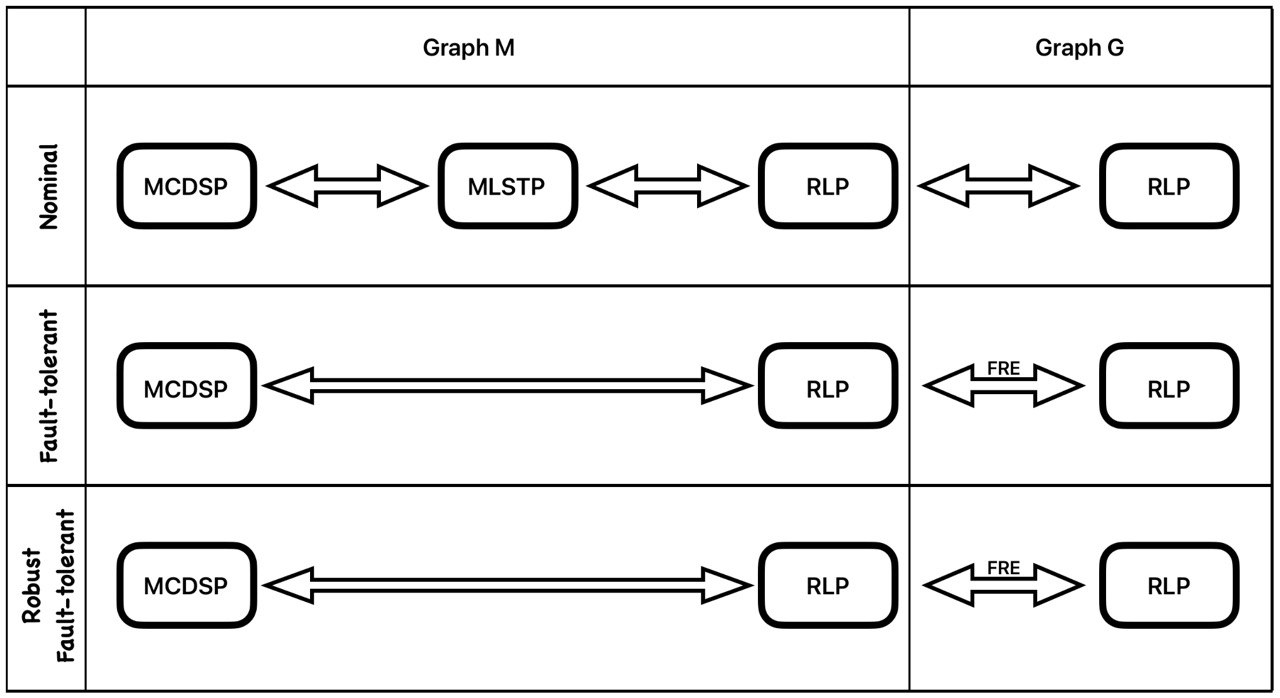}
		\end{center}
		\caption{Problem equivalency on $M$}\label{fig:flowchart}
	\end{figure}
	\section{Mathematical Models of RFTRLP}
	\label{sec:mathematical-models}
	In this section, we present two mathematical formulations for solving the robust FTMCDSP (RFTMCDSP), and, subsequently, the RFTRLP. 
	
	In the first formulation, we use the network flow problem to develop an integer programming (IP), which we call the \textit{flow-based} model. By representing the problem as a network flow, we leverage existing efficient algorithms and optimization techniques to find solutions that meet the desired criteria.
	
	In the second formulation, we integrate the minimum cut problem with Menger's theorem, and we call it the \textit{cut-based} model. This formulation focuses on identifying critical cuts in the network whose removal disrupts the connectivity required for the solution. By applying Menger's theorem, we can systematically determine these cuts and thus develop a robust solution framework.
	
	Both models involves a large number of variables and constraints; therefore, in the last part of this section, we develop an iterative solution approach only using the flow-based model to effectively solve large instances.
	
	Before proceeding to formulation, we describe a preprocessing step to practically reduce the computational time. 
	%
	\subsection{Preprocessing}
	\label{subsec:preprocessing}
	To tackle the problem effectively, we begin by applying a heuristic method to reduce the computational time needed. This reduction is crucial to simplify subsequent steps of the solution process. A key observation is that any node within the graph $M$ with a degree of two requires the placement of regenerators on both neighboring nodes because such a node must rely on its two neighbors for communication with the rest of the network. Consequently, the first step involves identifying all nodes in the network with a degree of two. This can be calculated in a polynomial time with $O$($n^2$). Once these nodes are identified, their neighboring nodes are added to the set $L$, which represents the initial placement of regenerators.
	
	Following this initial placement, we can start solving our models with a given partial solution referred to as the warm-start. This technique will reduce the solution time by reducing the branch-and-bound tree size for the out-of-the-box solvers like CPLEX and Gurobi.
	\subsection{Flow-Based Model}
	\label{subsec:flow-model}
	In the first solution approach, we present an IP formulation using the network flow problem. To write the desired model, we assume sending a unit flow between any two arbitrary nodes on $G$, through two edge-disjoint paths. Therefore, we can formulate the RFTMCDSP to solve the corresponding RFTRLP. In the following model, we use the decision variable $\pmb{x}\in\{0,1\}^n$ that takes the value of one if the corresponding node is included in the two-connected dominating set $L$ and zero otherwise.
	
	To write the constraints of the model, first, we need to ensure that there are two edge-disjoint paths between any pair of nodes included in $L$. As a result, we send a unit flow from any $p\in V$ to any $q\in V$ with $p\neq q$, through two edge-disjoint paths $f$ and $g$. In this sense, sum of the outgoing flows from $p$ is one unit more than the sum of incoming flows to $p$, if both nodes $p$ and $q$ are chosen to be included in $L$, otherwise the incoming flow to $p$ is equal to the outgoing flow of $p$. Similarly, sum of the incoming flows to $q$ is one unit more than the sum of outgoing flows from $q$, again if both nodes $p$ and $q$ are included in $L$, otherwise the outgoing flow of $q$ is equal to the incoming flow to $q$. In addition, the outgoing flow of any internal nodes of the $p-q$ path (if there is any internal node) is equal to its incoming flow. Thus we have
	\vspace{-0.2cm}
	
	\footnotesize
	\begin{align*}
		& \sum_{(p,j)\in E} f^{pj}_{pq} - \sum_{(j,p)\in E} f^{jp}_{pq} = x_p x_q & \forall p,q\in V \\
		& \sum_{(q,j)\in E} f^{qj}_{pq} - \sum_{(j,q)\in E} f^{jq}_{pq} = -x_p x_q & \forall p,q\in V \\
		& \sum_{(i,j)\in E} f^{ij}_{pq} - \sum_{(j,i)\in E} f^{ji}_{pq} = 0 & \forall p,q\in V,\; \forall i\in V\backslash(p,q)
	\end{align*}
	\normalsize
	
	We also have the exact same constraints for path $g$. Obviously, we have a non-linearity in our constraint, since the difference between the incoming and outgoing flow of a node depends on the existence of $p$ and $q$ in $L$. Consequently, we introduce a variable using the McCormick law to remove the non-linearity as follows
	\vspace{-0.2cm}
	
	\footnotesize
	\begin{align*}
		& t_{pq} = x_p x_q & \forall p,q\in V\\
		& t_{pq} \le \min(x_p,x_q) & \forall p,q\in V\\
		& t_{pq} \ge \max(0,x_p - (1-x_q))& \forall p,q\in V
	\end{align*}
	\normalsize
	this way, $t_{pq}=1$ if and only if both $x_p=1$ and $x_q=1$.
	
	Then, it is important to ensure any node is connected to at least two nodes of $L$, which can be formulated in the form of the following constraint.
	\vspace{-0.2cm}

	\footnotesize
	\begin{align*}
		& \quad \textstyle\sum_{j\in\mathcal{N}_i} x_j \ge 2, && \forall i \in V
	\end{align*}
	\normalsize
	where $\mathcal{N}_i$ represents the neighboring nodes of $i$ on the transformed graph $M$. Moreover, it is necessary that two paths $f$ and $g$ are edge-disjoint, thus we have:
	\vspace{-0.2cm}
	
	\footnotesize
	\begin{align*}
		& \quad f^{ij}_{pq} + g^{ij}_{pq} \le 1, && \forall p,q\in V,\; \forall (i,j)\in E
	\end{align*}
	\normalsize
	
	Finally, we add the next constraint to satisfy the FRE criterion
	\vspace{-0.2cm}
	
	\footnotesize
	\begin{align*}
		& \quad \textstyle\sum_{j\in \mathcal{N}_i} x_j \geq \sum_{k\in\mathcal{N}_i^{\prime}} x_k +1 & \forall i\in \overline{V}
	\end{align*}
	\normalsize
	where $\mathcal{N}_i^{\prime}$ consists of nodes lying on all paths added to $M$ using the shortest path algorithm and start with $i$. Now we can formulate the flow-based model, which we refer to it as IP-FB, as follows
	\vspace{-0.2cm}
	
	\footnotesize
	\begin{align*}
		\min & \;\; z \displaybreak[1]\\
		\text{s.t.} & \;\; \textstyle\sum_{i\in V} c^k_i x_i \le z, &\forall k \in [N] \displaybreak[1]\\
		& \;\; \textstyle\sum_{(p,j)\in E} f^{pj}_{pq} - \sum_{(j,p)\in E} f^{jp}_{pq} = t_{pq} & \forall p,q\in V \displaybreak[1]\\
		& \;\; \textstyle\sum_{(q,j)\in E} f^{qj}_{pq} - \sum_{(j,q)\in E} f^{jq}_{pq} = -t_{pq} & \forall p,q\in V \displaybreak[1]\\
		& \;\; \textstyle\sum_{(i,j)\in E} f^{ij}_{pq} - \sum_{(j,i)\in E} f^{ji}_{pq} = 0 & \forall p,q\in V,\; \forall i\in V\backslash(p,q)\\
		& \;\; \textstyle\sum_{(p,j)\in E} g^{pj}_{pq} - \sum_{(j,p)\in E} g^{jp}_{pq} = t_{pq} & \forall p,q\in V \displaybreak[1]\\
		& \;\; \textstyle\sum_{(q,j)\in E} g^{qj}_{pq} - \sum_{(j,q)\in E} g^{jq}_{pq} = -t_{pq} & \forall p,q\in V \displaybreak[1]\\
		& \;\; \textstyle\sum_{(i,j)\in E} g^{ij}_{pq} - \sum_{(j,i)\in E} g^{ji}_{pq} = 0 & \forall p,q\in V,\; \forall i\in V\backslash(p,q)\\
		& \;\; \textstyle\sum_{j\in\mathcal{N}_i} x_j \ge 2, & \forall i \in V \displaybreak[1]\\
		& \;\; \textstyle\sum_{j\in \mathcal{N}_i} x_j \geq \sum_{k\in\mathcal{N}_i^{\prime}} x_k +1, & \forall i\in \overline{V} \displaybreak[1]\\
		& \;\; t_{pq} \le \min(x_p,x_q), & \forall p,q\in V  \displaybreak[1]\\
		& \;\; t_{pq} \ge \max(0,x_p - (1-x_q)), & \forall p,q\in V  \displaybreak[1]\\
		& \;\; f^{ij}_{pq} + g^{ij}_{pq} \le 1, & \forall p,q\in V,\; \forall (i,j)\in E  \displaybreak[1]\\
		& \;\; f^{ij}_{pq} , g^{ij}_{pq} \le x_i, & \forall p,q\in V,\; \forall (i,j)\in E  \displaybreak[1]\\
		& \;\; f^{ij}_{pq} , g^{ij}_{pq} \le x_j, & \forall p,q\in V,\; \forall (i,j)\in E  \displaybreak[1]\\
		& \;\; f^{ij}_{pq} , g^{ij}_{pq} \le x_p, & \forall p,q\in V,\; \forall (i,j)\in E  \displaybreak[1]\\
		& \;\; f^{ij}_{pq} , g^{ij}_{pq} \le x_q, & \forall p,q\in V,\; \forall (i,j)\in E  \displaybreak[1]\\
		& \;\; f^{ij}_{pq} \in [0,1], & \forall p,q\in V,\; \forall (i,j)\in E \displaybreak[1]\\
		& \;\; g^{ij}_{pq} \in [0,1], & \forall p,q\in V,\; \forall (i,j)\in E \displaybreak[1]\\
		& \;\; x_i \in \{0,1\}, & \forall i \in V \displaybreak[1]\\
		& \;\; t_{pq} \in \{0,1\}, & \forall p,q\in V 
	\end{align*}
	\normalsize
	\subsection{Cut-Based Model}\label{subsec:cut-model}
	
	In this section, we present the second model whose solution returns the RFTMCDSP on graph $G$, and thus solves the RFTRLP. This formulation is obtained having considered the minimum cut problem combined with the Menger's theorem \cite{menger1927allgemeinen}. Based on the Menger's theorem, the number of edge-disjoint paths between two nodes $p$ and $q$ on a graph is equal to the minimum number of cutting edges that separate these two vertices, which we refer to it as $p-q$ cut.
	
	It is necessary to remind that the cut $[S,V\backslash S]$, separating the vertices $p,q\in V$, partitions the set of vertices $V$ into two non-empty subsets $S$ and $V\backslash S$ such that $p\in S$ and $q\in V\backslash S$. For simplicity, we use $S$-cut to refer to the cut $[S,V\backslash S]$.
	
	Analogously to the model introduced in Section~{\ref{subsec:flow-model}}, we use the decision variable $\pmb{x}\in\{0,1\}^n$ corresponding to the nodes of the network which is equal to one if the respective node is included in $L$ and zero otherwise. To formulate this problem we first formulate constraints related to the Menger's theorem. Therefore, we introduce the following variables
	\vspace{-0.2cm}
	
	\footnotesize
	\begin{align*}
		& p_s = \min \Big(1, \sum_{i\in S} x_i \Big) & \forall S\subseteq V \\
		& q_s = \min \Big(1, \sum_{i\in V\backslash S} x_i \Big) & \forall S\subseteq V
	\end{align*}
	\normalsize
	where both $p_s, q_s\in\{0,1\}$ for all $S\subseteq V$. Here, $p_s=1$ if at least one node of $S$ is included in the solution subset and $p_s=0$ otherwise. In a similar way, $q_s=1$ if at least one node of $V\backslash S$ is included in the solution subset and $p_s=0$ otherwise.
	
	According to the assumption, to ensure that there are two edge-disjoint paths between all pair of nodes included in the selected subset of nodes, it is sufficient that the size of the cut for any two vertices of the minimum two-connected dominating set is at least equal to two. We can formulate this constraint as follows
	\vspace{-0.2cm}
	
	\footnotesize
	\begin{align*}
		& \sum_{(i,j)\in E,\; i\in S,\; j \in V\backslash S} x_i x_j \ge 2 p_s q_s & \forall S\subseteq V
	\end{align*}
	\normalsize
	
	In this constraint, we have non-linearity on both the left-hand-side (lhs) and the right-hand-side (rhs). Both lhs and rhs of this constraint can be linearized using the McCormick law. For the lhs we have 
	\vspace{-0.2cm}
	
	\footnotesize
	\begin{align*}
		& r_{ij} = x_i x_j & \forall S\subseteq V, \; \forall i\in S, \; \forall j\in V\backslash S \\
		& r_{ij} \le \min \big(x_i,x_j\big) & \forall S\subseteq V, \; \forall i\in S, \; \forall j\in V\backslash S \\
		& r_{ij} \ge \max \big(0,x_i-(1-x_j)\big) & \forall S\subseteq V, \; \forall i\in S, \; \forall j\in V\backslash S
	\end{align*}
	\normalsize
	the rhs also can be linearized in the following way
	
	\footnotesize
	\begin{align*}
		& t_s = p_s q_s & \forall S\subseteq V \\
		& t_s \le \min \big(p_s , n q_s\big) & \forall S\subseteq V \\
		& t_s \ge \max \big( 0, p_s - n(1- q_s) \big) & \forall S\subseteq V
	\end{align*}
	\normalsize
	where $\lvert V\rvert =n$. Furthermore, the constraints to satisfy FRE criterion as well as the two-connectivity and edge disjoint requirements are formulated similar to their corresponding constraints in Section~\ref{subsec:flow-model}. Hence, the cut-based IP formulation, which we refer to it as IP-CB, could be written in the following way
	\vspace{-0.2cm}
	
	\footnotesize
	\begin{align*}
		\min \quad & z \displaybreak[1]\\
		\text{s.t.} \quad & \sum_{i\in V} c^k_i x_i \le z & \forall k\in[N] \displaybreak[1]\\
		& p_s = \min \Big(1, \sum_{i\in S} x_i \Big) & \forall S\subseteq V \displaybreak[1]\\
		& q_s = \min \Big(1, \sum_{i\in V\backslash S} x_i \Big) & \forall S\subseteq V \displaybreak[1]\\
		& t_s \le \min \big(p_s , n q_s\big) & \forall S\subseteq V \displaybreak[1]\\
		& t_s \ge \max \big( 0, p_s - n(1-q_s) \big) & \forall S\subseteq V \displaybreak[1]\\
		& r_{ij} \le \min \big(x_i,x_j\big) & \forall S\subseteq V, \; \forall i\in S, \; \forall j\in V\backslash S \displaybreak[1]\\
		& r_{ij} \ge \max \big(0,x_i-(1-x_j)\big) & \forall S\subseteq V, \; \forall i\in S, \; \forall j\in V\backslash S \displaybreak[1]\\
		& \sum_{(i,j)\in E,\; i\in S,\; j \in V\backslash S} r_{ij} \ge 2 t_s & \forall S\subseteq V \displaybreak[1]\\
		& \sum_{j\in\mathcal{N}_i} x_j \ge 2 & \forall i \in V \displaybreak[1]\\
		& \sum_{j\in \mathcal{N}_i} x_j \geq \sum_{k\in\mathcal{N}_i^{\prime}} x_k + 1 & \forall i\in \overline{V} \displaybreak[1]\\
		& t_s, \; p_s, \; q_s \in \{0,1\} & \forall S\subseteq V \displaybreak[1]\\
		& r_{ij} \in \{0,1\} & \forall i, j \in V \displaybreak[1]\\
		& x_i \in \{0,1\} & \forall i \in V 
	\end{align*}
	\normalsize
	
	Although the IP-CB formulation contains an exponential number of cut constraints indexed by all subsets $S \subseteq V$, it is solved efficiently using a branch-and-cut framework. In particular, the model is implemented in CPLEX, which employs branch-and-cut by default, and the connectivity constraints are handled implicitly through dynamic cut generation. Rather than enumerating all cut constraints, only violated constraints are identified via minimum-cut separation and added on demand during the solution process.
	
	\subsection{Iterative Method}\label{subsec:iterative-approach}
	
	As IP-CB has exponential number of variables and constraint, it cannot be applied for large-scale instances, for it needs a very large memory size as well as solution time to be solved. In addition, IP-FB also generates finitely many variables and constraints and thus it faces the same problem to be solved optimally for instances with many nodes. Therefore, in order to practically solve large-scale instances we propose an iterative approach to solve the RFTRLP optimally. This model is divided into a master problem (MP) which can be written as follows:
	\vspace{-0.2cm}
	
	\footnotesize
	\begin{align*}
		\min \quad & z \displaybreak[1]\\
		\text{s.t.} \quad & \sum_{i\in V} c^k_i x_i \le z & \forall k\in[N] \displaybreak[1]\\
		& \sum_{j\in \mathcal{N}_i} x_j \ge 2 & \forall i \in V \displaybreak[1]\\
		& \sum_{j\in \mathcal{N}_i} x_j \geq \sum_{k\in\mathcal{N}_i^{\prime}} x_k +1 & \forall i\in \overline{V} \displaybreak[1]\\
		& x_i \in \{0,1\} & \forall i \in V \displaybreak[1]
	\end{align*}
	\normalsize
	where $\mathcal{N}_i$ and $\mathcal{N}_i^{\prime}$ are defined in the same way as in Sections~\ref{subsec:flow-model} and \ref{subsec:cut-model}. Note that the solution obtained by solving the master problem might not be an optimal or even a feasible solution to the RFTRLP. See Example~\ref{exp:mp-solution}.
	
	\begin{example}\label{exp:mp-solution}
		In Figure~\ref{fig:mp-vs-2rlp} the cost of installing regenerators on nodes 3 and 7 is ten while the cost of deploying a regenerator in the remaining nodes are one. The optimal solution of the master problem is $\{1,2,4,5,6,8\}$ with cost 6, while this is not a feasible solution to the RFTRLP, for it is not possible to communicate from any node of $\{1,2,8\}$ to any node of $\{4,5,6\}$.
		\begin{figure}[htbp]
			\begin{center}
				\includegraphics[width=0.4\textwidth]{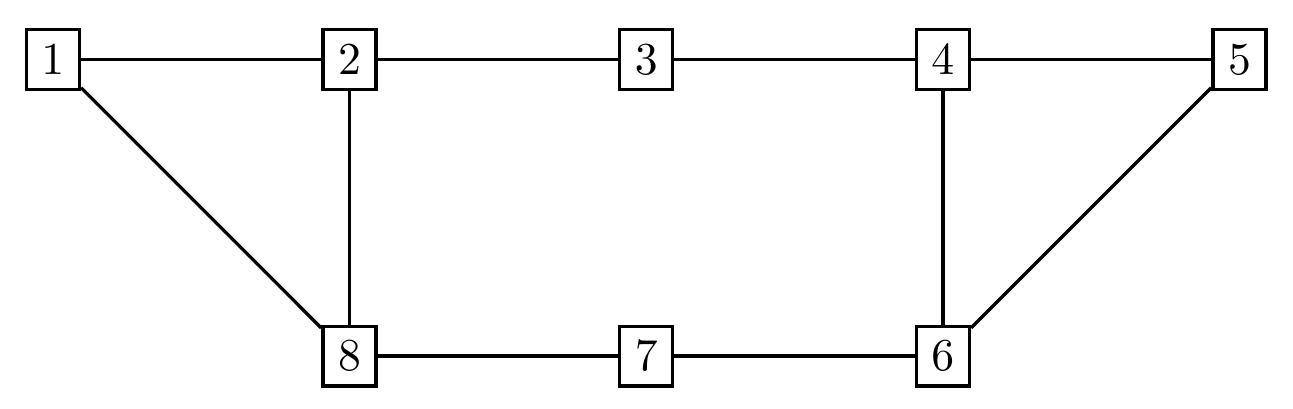}
			\end{center}
			\caption{Ineqivalency of solution to MP and RFTRLP}\label{fig:mp-vs-2rlp}
		\end{figure}
	\end{example}
	
	The solution found using MP might not be connected, therefore we need to assure that the two-edge connectivity is guaranteed through considering a subproblem (SP). We can introduce the SP using either of IP-FB or IP-CB. However, as the IP-CB grows exponentially, we only focus on the IP-FB for the SP. Therefore, we first fix the solution of MP in SP and solve it to find the nodes that guarantee the two-edge connectivity in the union of the found solution by both SP and MP. Hence, we can formulate the SP as follows:
	\vspace{-0.2cm}
	
	\footnotesize
	\begin{align*}
		\min \quad & z \displaybreak[1]\\
		\text{s.t.} \quad & \sum_{i\in V} c^k_i x_i \le z & \forall k \in [N] \displaybreak[1]\\
		& \sum_{(p,j)\in E} f^{pj}_{pq} - \sum_{(j,p)\in E} f^{jp}_{pq} = t_{pq} & \forall p,q\in V \displaybreak[1]\\ 
		& \sum_{(q,j)\in E} f^{qj}_{pq} - \sum_{(j,q)\in E} f^{jq}_{pq} = -t_{pq} & \forall p,q\in V \displaybreak[1]\\
		& \sum_{(i,j)\in E} f^{ij}_{pq} - \sum_{(j,i)\in E} f^{ji}_{pq} = 0 & \forall p,q\in V,\; \forall i\in V\backslash(p,q) \displaybreak[1]\\
		& \sum_{(p,j)\in E} g^{pj}_{pq} - \sum_{(j,p)\in E} g^{jp}_{pq} = t_{pq} & \forall p,q\in V \displaybreak[1]\\ 
		& \sum_{(q,j)\in E} g^{qj}_{pq} - \sum_{(j,q)\in E} g^{jq}_{pq} = -t_{pq} & \forall p,q\in V \displaybreak[1]\\
		& \sum_{(i,j)\in E} g^{ij}_{pq} - \sum_{(j,i)\in E} g^{ji}_{pq} = 0 & \forall p,q\in V,\; \forall i\in V\backslash(p,q) \displaybreak[1]\\
		& t_{pq} \le \min(x_p,x_q) & \forall p,q\in V \displaybreak[1]\\
		& t_{pq} \ge \max(0,x_p - (1-x_q)) & \forall p,q\in V \displaybreak[1]\\
		& f^{ij}_{pq} + g^{ij}_{pq} \le 1 & \forall p,q\in V,\; \forall (i,j)\in E \displaybreak[1]\\
		& f^{ij}_{pq} \; , \; g^{ij}_{pq} \le x_i & \forall p,q\in V,\; \forall (i,j)\in E \displaybreak[1]\\
		& f^{ij}_{pq} \; , \; g^{ij}_{pq} \le x_j & \forall p,q\in V,\; \forall (i,j)\in E \displaybreak[1]\\
		& f^{ij}_{pq} \; , \; g^{ij}_{pq} \le x_p & \forall p,q\in V,\; \forall (i,j)\in E \displaybreak[1]\\
		& f^{ij}_{pq} \; , \; g^{ij}_{pq} \le x_q & \forall p,q\in V,\; \forall (i,j)\in E \displaybreak[1]\\
		& f^{ij}_{pq} \in [0,1] & \forall p,q\in V,\; \forall (i,j)\in E \displaybreak[1]\\
		& g^{ij}_{pq} \in [0,1] & \forall p,q\in V,\; \forall (i,j)\in E \displaybreak[1]\\
		& x_i \in \{0,1\} & \forall i \in V \displaybreak[1]\\
		& t_{pq} \in\{0,1\} & \forall p,q\in V 
	\end{align*}
	\normalsize
	
	In addition, the solution obtained using SP (combination of solutions found in MP and SP) is a feasible solution, but not optimal. Therefore we fix the solution SP$\backslash$MP into MP and continue this procedure until the solution obtained by SP is the same as the solution found by its previous SP. In other words, the solution of SP is also two-edge connected. We refer to this iterative approach as IT-FB.
	
	\begin{theorem}\label{theorem:IT-FB-convergence}
		The IT-FB method converges to the optimal solution within finite iterations.
	\end{theorem}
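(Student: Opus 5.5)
Finiteness and optimality are handled separately, using only the structure of IT-FB as described above.

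\emph{Finiteness.} I will treat the algorithm as producing a sequence of pairs $(\pmb{x}^{(k)}_{MP},\pmb{x}^{(k)}_{SP})$. The SP is built from IP-FB, and its two-edge-connectivity requirements can only be met by adding nodes. So the SP solution is a superset of the fixed MP solution: $L^{(k)}_{SP}\supseteq L^{(k)}_{MP}$. The update then fixes $L^{(k)}_{SP}\setminus L^{(k)}_{MP}$ to one in the next MP. Hence the sets of variables fixed to one form a nondecreasing chain of subsets of $V$. The same holds for the SP outputs, since each new MP solution contains all previously fixed nodes. A nondecreasing chain of subsets of the finite set $V$ can strictly increase at most $n$ times. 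Therefore, after at most $n$ iterations the SP output must repeat, which is exactly the stopping rule. This gives termination in at most $n+1$ iterations.

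\emph{Feasibility at termination.} When the SP solution coincides with the previous one, the current set $L$ is feasible for the SP. The SP contains the flow constraints of IP-FB, and MP's domination and FRE constraints are retained through fixing. So $L$ satisfies every constraint of IP-FB and is feasible for RFTMCDSP. By Theorem~\ref{thm:fre} and Definition~\ref{def:mcdsp}, it is then feasible for RFTRLP.

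\emph{Optimality (main obstacle).} The MP is a relaxation of IP-FB: it keeps the objective epigraph, domination and FRE constraints, and drops the flow constraints. Thus the first MP value is a lower bound $z^*_{LB}\le z^*$. The difficulty is that fixing nodes in later MPs is not a valid relaxation, because a fixed node might not belong to any optimal solution. I would argue the following monotonicity: every optimal RFTRLP solution $L^*$ contains a two-edge-connected dominating structure on $M$. Each node added by the SP is chosen at minimum robust cost to two-edge-connect the current components. I would try to show, by an exchange argument on minimum-cost connectors, that there is always an optimal $L^*$ containing all fixed nodes, so that the fixed MPs remain relaxations of IP-FB restricted to a set still containing an optimum. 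At termination the MP-plus-SP solution is feasible and its cost equals the lower bound given by the restricted relaxation, which establishes optimality.

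If this exchange argument fails in general, and Example~\ref{exp:mp-solution} suggests it might, I would weaken the claim. The fallback is that IT-FB terminates finitely with a feasible solution that is optimal among solutions containing the fixed set. I would state explicitly the additional condition under which full optimality holds, namely that the SP-added nodes lie in some optimal solution.
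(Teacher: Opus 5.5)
\paragraph{Verdict.} The proposal has a genuine gap at the optimality step, and that gap cannot be closed. Your finiteness argument is fine, but the optimality step rests on an exchange lemma: that some optimal solution contains every node the SP adds. You only hope for this lemma, and it is false.

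\paragraph{Counterexample.} Take $N=1$ and all edge lengths equal to $d_{\max}$, so $M=G$ and $\overline{V}=\emptyset$. Let $G$ consist of:
\begin{itemize}
\item triangles $a_1a_2a_3$ and $b_1b_2b_3$, with cost $10$ per node;
\item a $4$-cycle $d_1d_2d_3d_4$, with cost $16$ per node, where $d_1,d_2$ are adjacent to every $a_i$ and $d_3,d_4$ are adjacent to every $b_i$;
\item two nodes $c_1\sim a_1,b_1,d_1,d_4$ and $c_2\sim a_2,b_2,d_2,d_3$, with cost $15$ each.
\end{itemize}
IT-FB then runs as follows.
\begin{enumerate}
\item The unique MP optimum is $\{a_1,a_2,a_3,b_1,b_2,b_3\}$, with cost $60$.
\item Its unique cheapest two-edge-connected superset adds $\{c_1,c_2\}$; every other augmentation costs at least $47$.
\item With $c_1,c_2$ fixed, the MP returns $\{a_1,a_2,b_1,b_2,c_1,c_2\}$, with cost $70$. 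This set induces a $6$-cycle, so IT-FB stops at cost $70$.
\end{enumerate}
But $\{d_1,d_2,d_3,d_4\}$ induces a $4$-cycle and gives every vertex two neighbours in the set. It is therefore feasible for IP-FB with cost $64$.

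So no exchange argument can work, and IT-FB as described does not always return an optimum. The paper's proof does not supply the missing step either. It argues that the MP sets $L^t$ are non-increasing, then simply asserts that $K^t\cup L^t$ is optimal once $L^{t+1}=L^t$.

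\paragraph{What survives.} Your fallback is provable, and your route to it is cleaner than the paper's.
\begin{itemize}
\item If the fixed set $F$ accumulates, it can grow at most $n$ times.
\item Once $F$ stabilises, the SP adds nothing.
\item The MP with $F$ fixed is a relaxation of IP-FB restricted to $F\subseteq L$. At that point its optimum is itself feasible, hence optimal among solutions containing $F$. In the instance, $70$ is indeed optimal among sets containing $c_1,c_2$.
\end{itemize}
The paper instead relies on $L^{t+1}\subset L^t$, which fails for the full MP sets: above, $L^2$ contains $c_1,c_2\notin L^1$.

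\paragraph{Two details to fix.} First, accumulation of fixings must be made part of the algorithm. If only the latest $\mathrm{SP}\setminus\mathrm{MP}$ is fixed, the instance alternates between SP outputs of cost $90$ and $70$, and the stopping rule never fires. Second, your claim that the SP outputs are nested is false: the second output drops $a_3,b_3$. Your chain argument does not need that claim.
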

	\begin{proof}
		Let $L^t$ and $K^t$ denote the solutions obtained at iteration $t\in\mathcal{T}$ by solving MP and SP, respectively. By construction, the set $L^t$ is a minimum-cost solution that ensures, under the FRE constraint, every node in the network is adjacent to at least two nodes equipped with regenerators.
		
		After iteration $t$, the set $K^t$ is fixed and MP is solved again to obtain $L^{t+1}$. Since $K^t$ guarantees that the solution at iteration $t$ is two-edge-connected, it follows that $|K^t|\geq 2$. At the beginning of iteration $t+1$, if no node in the network is adjacent to at least two nodes in $K^t$, then the solution of MP remains unchanged, i.e., $L^{t+1}=L^t$. In this case, the union $K^t \cup L^t$ constitutes an optimal solution, and the algorithm terminates.
		
		Otherwise, if there exists at least one node that is adjacent to two nodes in $K^t$, then the MP yields a strictly smaller solution set, i.e., $L^{t+1}\subset L^t$, and the algorithm proceeds to the next iteration. This process is repeated until an iteration $t'>t$ is reached such that $L^{t'+1}=L^{t'}$.
		
		Since the number of nodes in the network is finite and the sequence $\{L^t\}$ is monotonically non-increasing, the algorithm must terminate after a finite number of iterations.
	\end{proof}
	
	\section{Experimental Results}\label{sec:experiments}
	
	In this section, we check the performance of our solution approaches. To the best of our knowledge, there exists no similar work to compare our methods with; therefore, we use the most theoretically relevant work in \cite{li2020branch} and adjust its IP accordingly (we refer to it as IP-LA) to make a valid comparison. The best performance of solution methods for large-scale instances introduced in \cite{li2020branch} belongs to the branch-and-Benders decomposition using an incumbent start. However, this cannot be compared to our iterative approach since their methods cannot be adjusted for a discrete uncertainty set affecting its objective function.
	
	We note that the exact formulations, IP-FB and IP-CB, are primarily introduced to provide rigorous mathematical models, and serve as optimal benchmarks for validation purposes. Due to the presence of complex connectivity constraints and, in the case of IP-CB, an exponential number of potential cuts, these formulations are inherently limited to small-sized instances. Accordingly, their evaluation is restricted to instances where optimality can be reached (in an acceptable time) and meaningful comparisons among formulations can be performed. The scalability to larger networks is addressed through the proposed iterative solution method, which is designed to overcome the computational limitations of the exact models and is evaluated separately on substantially larger instances.
	
	In this paper, we use two different instance generation methods, called Gen-1 and Gen-2. In Gen-1, the lengths of the edges are chosen from $\{100,101,\ldots,300\}$, and the cost of installing a regenerator on a node in each scenario (the uncertainty set $\U$) is selected from $\{100,101,\ldots,200\}$. In Gen-2, we use the idea of the sampling method introduced in \cite{goerigk2024benchmarking} to generate hard instances. Thus, with a ten-percent probability, edge lengths are chosen from $\{100,101,\ldots,120\}$ and with ninety-percent probability from $\{180,181,\ldots,200\}$. Moreover, the cost of each node in the uncertainty set is chosen from $\{180,181,\ldots,200\}$, i.e., the price of each node is selected from a tighter range. In both methods, all parameters are generated in a randomly uniform way, and we set $d_{max}=300$. Moreover, we introduce a new parameter called Dens, representing the generated graph's density. Therefore, the value of Dens ranges between 0 and 1 and illustrates the ratio of the number of edges in the generated graph to the number of edges in the corresponding complete graph.
	
	To avoid long process times, we generate large instances in a way that the all-pairs shortest path cannot add any new edges. Thus, for both Gen-1 and Gen-2, the edge lengths are derived from $\{151, 152,\ldots,300\}$ when all other parameter selections remain the same.
	
	We divide our experiments into two sets, with small and large instances. In the experiments on small instances, we compare IP-FB, IP-CB, and IP-LA. Specifically, we first solve instances with a fixed number of scenarios and a varying number of nodes (Exp-1), and then solve instances where the number of nodes is fixed and the number of scenarios varies (Exp-2). For Exp-1, we set $N=20$ and $n=\{10,11,\ldots,18\}$, while for Exp-2 we solve instances with $n=12$ and $N=\{20,30,\ldots,100\}$. In both Exp-1 and Exp-2, we fix Dens to 0.6. In addition, we conduct a third experiment (Exp-3) to compare the strength of the formulations in terms of their linear relaxations. Specifically, we evaluate the convex hull approximations by solving the LP relaxations of the proposed IP formulations. In this experiment, we fix $N=10$, consider $n=\{8,9,\ldots,16\}$, and set Dens to 0.4. In these experiments we solve 100 instances using a 1800-second time limit for Exp-1, and 600-second time limit for Exp-2 and Exp-3.
	
	In addition, for the large-scale experiments, we only evaluate the IT-FB as the most efficient solution method. Similar to the small experiments, we first solve instances with a fixed number of scenarios (Exp-4) with $N=20$ and a varying number of nodes $n=\{50,60,\ldots,150\}$. Then, we fix the number of nodes (Exp-5) to $n=100$ and vary the number of scenarios to $N=\{10,20,\ldots,100\}$. In both Exp-4 and Exp-5, we solve 100 instances considering three values of 0.4, 0.6, and 0.8 for graph density, as well as a 600-second time limit for Exp-4 and a 1200-second time limit for Exp-5.
	
	To evaluate solution times for each problem, and in addition to presenting the normal solution times, we use performance profiles as introduced in \cite{dolan2002benchmarking}. We briefly recall this concept: Let $\cS$ be the set of considered models, $\K$ the set of instances and $t_{k,s}$ the run time of model $s$ on instance $k$. We assume $t_{k,s}$ is set to infinity (or large enough) if model $s$ does not solve instance $k$ within the time limit. The percentage of instances for which the performance ratio of solver $s$ is within a factor $\tau \geq 1$ of the best ratio of all solvers is given by:
	\vspace{-0.2cm}
	
	\footnotesize
	\[k_s (\tau) = \frac{1}{\lvert \K \rvert} \; \Bigg\lvert \Bigg\{ k\in \K \; \vert \; \frac{t_{k,s}}{\text{min}_{\hat{s}\in \cS} t_{k,\hat{s}}}  \leq \tau \Bigg\} \Bigg \lvert  \]
	\normalsize
	Hence, the function $k_s$ can be viewed as the distribution function for the performance ratio, which is plotted in a performance profile for each model. Note that larger values of $k_s$ indicate that the corresponding solution method dominates the others on a greater proportion of instances. Therefore, the higher the value of $k_s$, the better the method is.
	
	For all experiments, we use CPLEX version 22.11 on a GenuineIntel pc-i440fx-7.2 CPU computer server running at 2.00 GHz with 15 GB of RAM. Additionally, only one core is used for each instance. Code was implemented in C++.
	
	\subsection{Results of Small Instances}
	
	
	Tables~\ref{table:exp1-gen1} and \ref{table:exp1-gen2} summarize the average costs and solution times of Exp-1. In these tables, the column \textit{SHP} indicates the time required to solve the shortest-path IP on the original graph in order to obtain the transformed graph $M$. Note that for IP-LA, the shortest-path IP must be solved separately for each fault scenario introduced in \cite{li2020branch}. Furthermore, the column \textit{IP} reports the average time required to solve the corresponding IP formulation on the transformed graph $M$. The results clearly show that the solution times of both SHP and IP increase as the number of nodes increases for all IP formulations.
	
	It can be seen that instances generated using Gen-2 are slightly harder to solve. In addition, although IP-CB is the fastest formulation for the smallest instances, IP-FB dominates the other two approaches, both in terms of SHP time and, more importantly, in terms of IP solution times. Both IP-FB and IP-CB show significantly lower IP times compared to IP-LA.
	\vspace{-0.2cm}
	\begin{table}[htb]
		\begin{center}
			\caption{Average results of Exp-1 for Gen-1 instances.}\label{table:exp1-gen1}
			\begin{tabular}{|c|c|c|c|c|c|c|c|c|c|c|}
				\hline
				& & \multicolumn{2}{|c|}{IP-FB}&\multicolumn{2}{|c|}{IP-CB}&\multicolumn{2}{|c|}{IP-LA} \\
				\cline{3-8}
				Nodes & Cost & SHP & IP & SHP & IP & SHP & IP \\
				\hline
				10 & 545.59 & 0.05 & 0.51 & 0.06 & 0.26  & 1.67 & 1.861 \\
				11 & 540.04 & 0.06 & 0.87 & 0.07 & 0.61  & 2.27 & 4.044 \\
				12 & 538.90 & 0.09 & 1.42 & 0.09 & 1.49  & 3.52 & 11.26 \\
				13 & 540.15 & 0.12 & 2.39 & 0.12 & 3.55  & 5.37 & 25.97 \\
				14 & 534.47 & 0.15 & 3.90 & 0.15 & 8.99  & 8.01 & 62.13 \\
				15 & 533.20 & 0.18 & 6.20 & 0.18 & 18.5  & 11.9 & 145.9 \\
				16 & 531.80 & 0.24 & 9.80 & 0.24 & 38.2  & 17.6 & 331.2 \\
				17 & 530.90 & 0.32 & 15.5 & 0.32 & 78.6  & 25.4 & 721.3 \\
				18 & 529.50 & 0.38 & 24.5 & 0.38 & 155 & 36.8 & 1502 \\
				\hline
			\end{tabular}
		\end{center}
	\end{table}
	
	Figures~\ref{fig:exp1-pp-gen1} and \ref{fig:exp1-pp-gen2} illustrate the performance profiles of the overall process times (computed as the sum of SHP and IP times), as well as the ratios of IP-LA times to those of our proposed methods. Note that \textit{SHP-LA} represents the ratio of the SHP time of IP-LA to that of our methods. Similarly, \textit{IP-LA/FB} and \textit{IP-LA/CB} denote the ratios of the IP time of IP-LA to the IP times of the FB and CB formulations, respectively.
	\vspace{-0.2cm}
	\begin{figure}[htb]
		\begin{center}
			\subfigure[Process Time]{\includegraphics[width=0.23\textwidth]{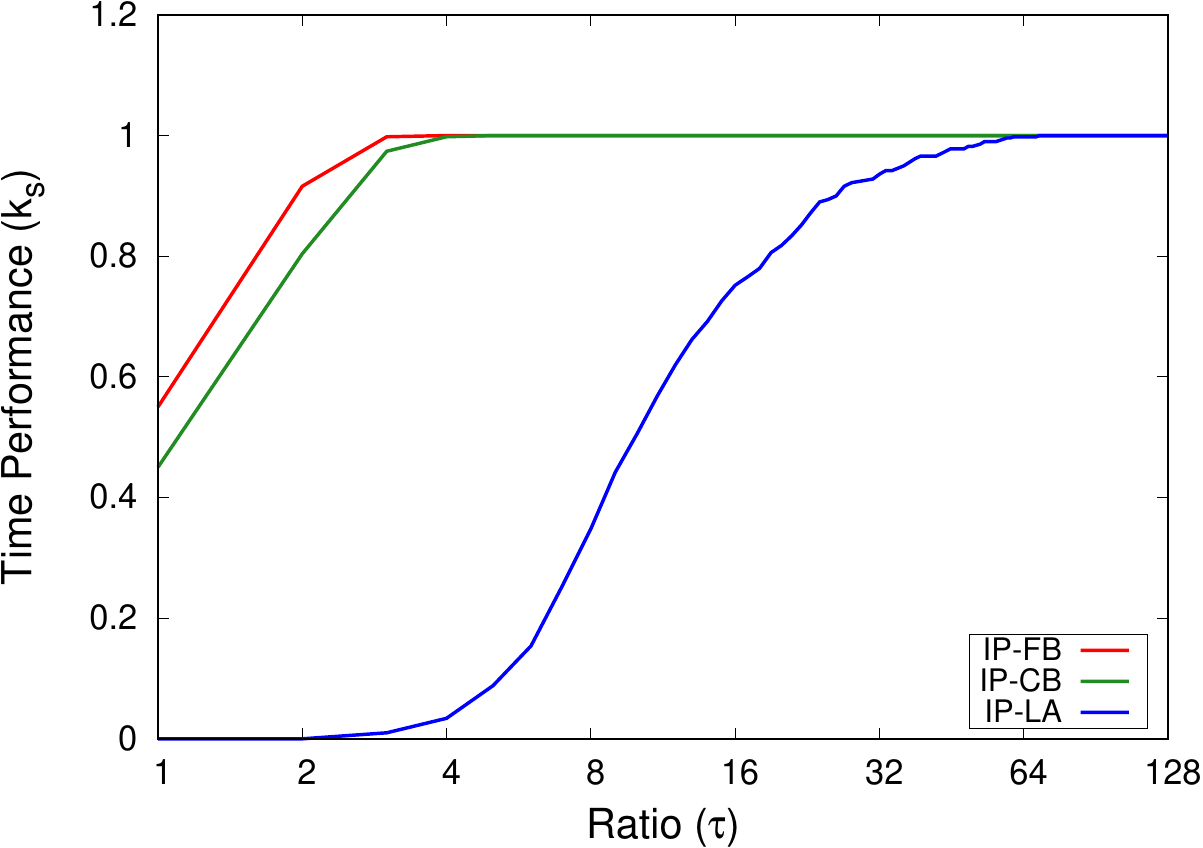}}
			\subfigure[Time Ratio]{\includegraphics[width=0.23\textwidth]{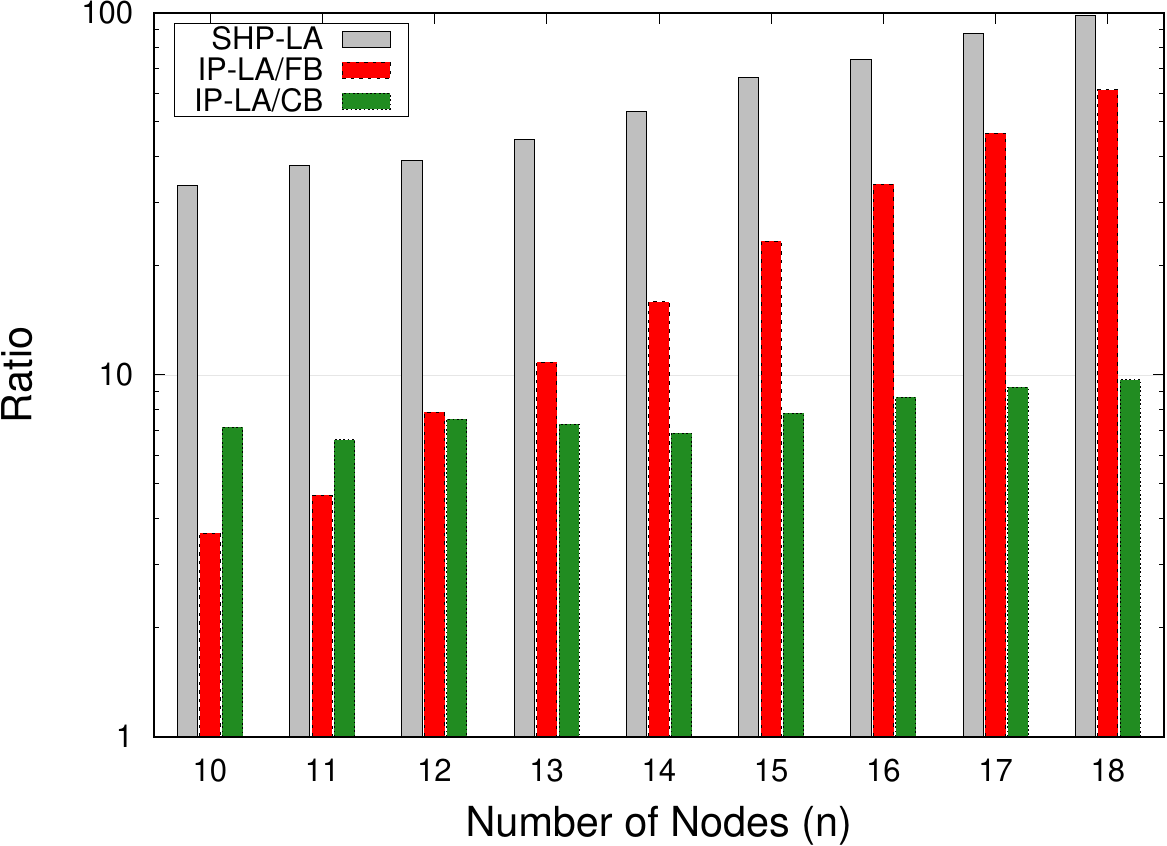}}
		\end{center}
		\caption{Performance profile of process time for IPs (a), and ratio of SHP and IP times of IP-LA compared to our IP models (b) in Exp-1 for Gen-1.}\label{fig:exp1-pp-gen1}
	\end{figure}

	The corresponding performance profiles demonstrate that IP-FB and IP-CB not only dominate IP-LA in terms of average IP and SHP times, but also achieve lower instance-wise process times, leading to higher $k_s$ values. Nevertheless, the relative performance of IP-FB and IP-CB is mixed. It should be noted that this happens only for small instances, due to the exponential number of variables and constraints in IP-CB formulation.
	\vspace{-0.2cm}
	\begin{table}[htb]
		\begin{center}
			\caption{Average results of Exp-1 for Gen-2 instances.}\label{table:exp1-gen2}
			\begin{tabular}{|c|c|c|c|c|c|c|c|}
				\hline
				& & \multicolumn{2}{|c|}{IP-FB}&\multicolumn{2}{|c|}{IP-CB}&\multicolumn{2}{|c|}{IP-LA} \\
				\cline{3-8}
				Nodes & Cost & SHP & IP & SHP & IP & SHP & IP \\
				\hline
				10 & 657.29 & 0.05 & 0.45 & 0.05 & 0.23  & 1.33 & 1.71 \\
				11 & 643.30 & 0.07 & 0.79 & 0.07 & 0.59  & 2.25 & 4.66 \\
				12 & 624.21 & 0.09 & 1.28 & 0.09 & 1.39  & 3.51 & 12.4 \\ 
				13 & 628.92 & 0.12 & 2.26 & 0.12 & 3.42  & 5.37 & 30.1 \\
				14 & 616.24 & 0.15 & 3.72 & 0.15 & 8.61  & 8.07 & 72.1 \\
				15 & 610.10 & 0.18 & 5.90 & 0.18 & 17.9  & 11.8 & 155.3 \\
				16 & 603.70 & 0.24 & 9.30 & 0.24 & 36.5  & 17.5 & 350.7 \\
				17 & 597.90 & 0.32 & 14.8 & 0.32 & 74.3  & 25.2 & 760.2 \\
				18 & 592.60 & 0.38 & 23.5 & 0.38 & 148   & 36.5 & 1550 \\
				\hline
			\end{tabular}
		\end{center}
	\end{table}
	\vspace{-0.2cm}
	
	The SHP and IP time ratios of the IP-LA formulation relative to our proposed methods indicate that, as the number of nodes increases, these ratios grow significantly for the SHP and IP-FB models (reaching values of up to approximately 100), while they remain nearly constant for IP-CB (reaching to almost 8). This clearly highlights the advantages of both proposed methods, and in particular IP-FB, for both Gen-1 and Gen-2.
	
	\begin{figure}[htb]
		\begin{center}
			\subfigure[Process Time]{\includegraphics[width=0.23\textwidth]{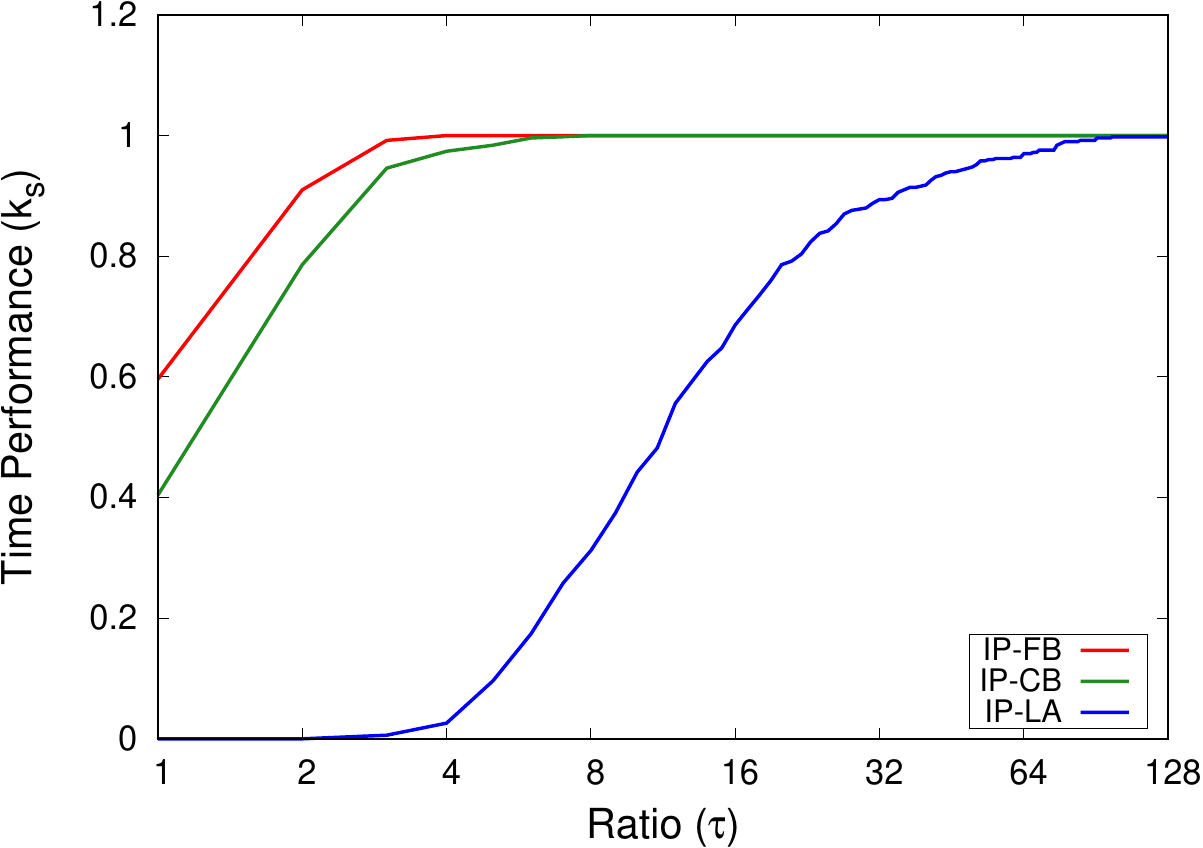}}
			\subfigure[Time Ratio]{\includegraphics[width=0.23\textwidth]{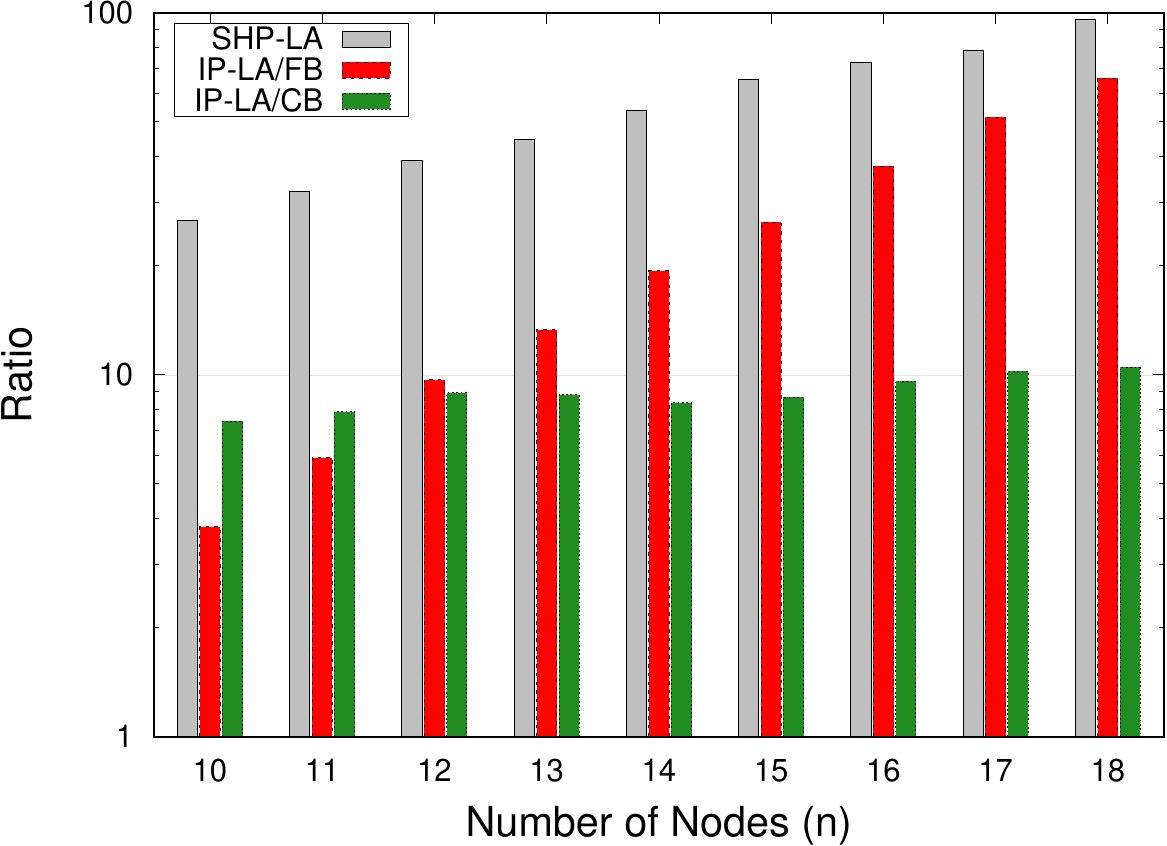}}
		\end{center}
		\caption{Performance profile of process time for solution methods (a), and relative SHP and IP times compared to IP-LA (b) in Exp-1 for Gen-2.}\label{fig:exp1-pp-gen2}
	\end{figure}
	
	
	The average results reported in Tables~\ref{table:exp2-gen1} and \ref{table:exp2-gen2} indicate that varying the number of scenarios has little impact on the IP solution times of all three formulations, as the considered instances are relatively small and can be solved very efficiently. Although IP-CB achieves the lowest average IP time for Gen-1, IP-FB exhibits the best overall performance for Gen-2. In terms of SHP times, our proposed methods are nearly 40 times faster than the SHP time required by IP-LA. Moreover, the proposed IP formulations are approximately 8 times faster in Gen-1 and up to 10 times faster in Gen-2 compared to IP-LA.
	
	\begin{table}[htb]
		\begin{center}
			\caption{Average results for Exp-2 under Gen-1 instance generation.}\label{table:exp2-gen1}
			\begin{tabular}{|c|c|c|c|c|c|c|c|}
				\hline
				& & \multicolumn{2}{|c|}{IP-FB} & \multicolumn{2}{|c|}{IP-CB} & \multicolumn{2}{|c|}{IP-LA} \\
				\cline{3-8}
				Scenarios & Cost & SHP & IP & SHP & IP & SHP & IP \\
				\hline
				20  & 538.90 & 0.09 & 1.41 & 0.09 & 1.41 & 3.48 & 11.10 \\
				30  & 546.50 & 0.09 & 1.44 & 0.09 & 1.42 & 3.50 & 11.14 \\
				40  & 554.09 & 0.09 & 1.47 & 0.09 & 1.42 & 3.51 & 11.21 \\
				50  & 558.60 & 0.09 & 1.50 & 0.09 & 1.44 & 3.53 & 11.27 \\
				60  & 563.04 & 0.09 & 1.53 & 0.09 & 1.45 & 3.54 & 11.33 \\
				70  & 566.40 & 0.09 & 1.53 & 0.09 & 1.45 & 3.53 & 11.48 \\
				80  & 569.69 & 0.09 & 1.53 & 0.09 & 1.45 & 3.51 & 11.56 \\
				90  & 571.10 & 0.09 & 1.54 & 0.09 & 1.46 & 3.52 & 11.59 \\
				100 & 572.25 & 0.09 & 1.54 & 0.09 & 1.47 & 3.52 & 11.67 \\
				\hline
			\end{tabular}
		\end{center}
	\end{table}
	\vspace{-0.7cm}
	\begin{table}[htb]
		\begin{center}
			\caption{Average results for Exp-2 under Gen-2 instance generation.}\label{table:exp2-gen2}
			\begin{tabular}{|c|c|c|c|c|c|c|c|}
				\hline
				& & \multicolumn{2}{|c|}{IP-FB}&\multicolumn{2}{|c|}{IP-CB}&\multicolumn{2}{|c|}{IP-LA} \\
				\cline{3-8}
				Scenarios & Cost & SHP & IP & SHP & IP & SHP & IP \\
				\hline
				20  & 624.21 & 0.09 & 1.26 & 0.09 & 1.41 & 3.49 & 12.4 \\
				30  & 626.00 & 0.09 & 1.29 & 0.09 & 1.42 & 3.50 & 12.8 \\
				40  & 627.58 & 0.09 & 1.33 & 0.09 & 1.42 & 3.49 & 13.2 \\
				50  & 628.70 & 0.09 & 1.34 & 0.09 & 1.43 & 3.50 & 13.7 \\
				60  & 629.78 & 0.09 & 1.34 & 0.09 & 1.41 & 3.51 & 12.6 \\
				70  & 630.10 & 0.09 & 1.35 & 0.09 & 1.43 & 3.51 & 13.3 \\
				80  & 630.45 & 0.09 & 1.33 & 0.09 & 1.42 & 3.49 & 13.4 \\
				90  & 631.00 & 0.09 & 1.36 & 0.09 & 1.44 & 3.50 & 13.8 \\
				100 & 631.32 & 0.09 & 1.37 & 0.09 & 1.45 & 3.49 & 13.9 \\
				\hline
			\end{tabular}
		\end{center}
	\end{table}

	Although the trend observed in Figure~\ref{fig:exp2-pp}-(b) is similar to that in Figures~\ref{fig:exp1-pp-gen1} and \ref{fig:exp1-pp-gen2}, a slightly different behavior can be observed in Figure~\ref{fig:exp2-pp}-(a). Specifically, IP-CB solves a larger fraction of instances within smaller $\tau$ factors, whereas IP-FB becomes the fastest method on an instance-wise basis as the $\tau$ factor increases.
	
	\begin{figure}[htb]
		\begin{center}
			\subfigure[Gen-1]{\includegraphics[width=0.23\textwidth]{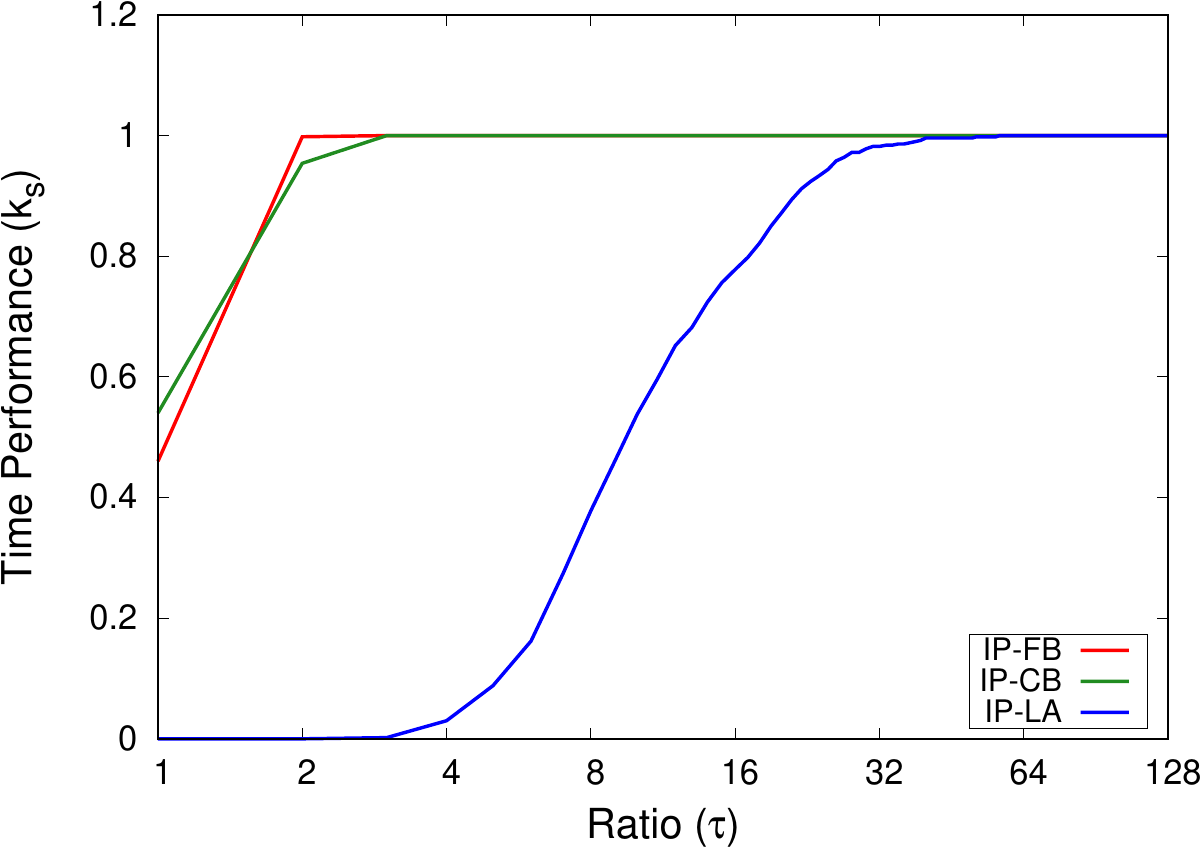}}
			\subfigure[Gen-2]{\includegraphics[width=0.23\textwidth]{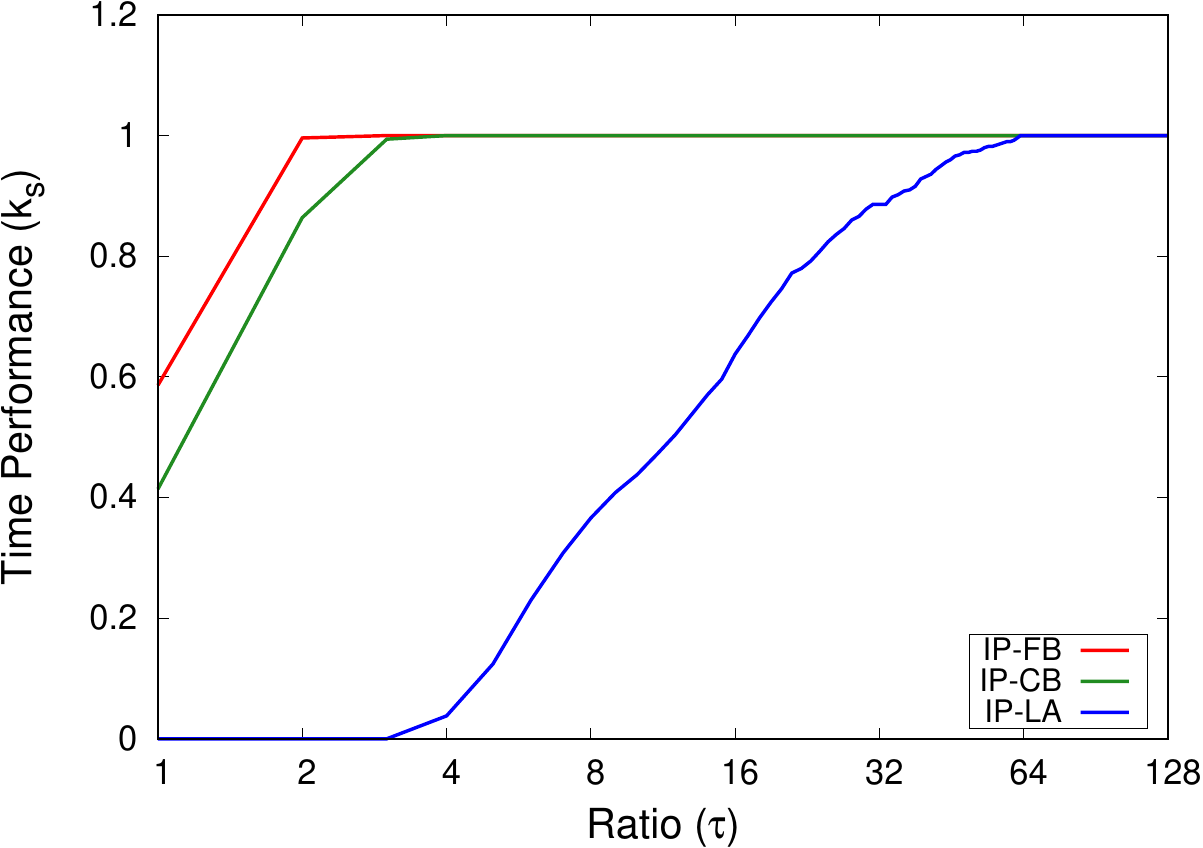}}
		\end{center}
		\caption{Performance profile of process time for solution methods in Exp-2.}\label{fig:exp2-pp}
	\end{figure}

	
	The results reported in Table~\ref{table:exp3-dens4-lprelax} present the average objective values (installation costs) obtained from the LP relaxations of IP-FB, IP-CB, and IP-LA. The column \textit{OPT} reports the optimal objective values of the corresponding integer programs. All values are rounded for clarity. Since the problem is formulated as a minimization problem, the LP relaxations yield lower objective values than the corresponding integer optimal solutions. Moreover, as all formulations share the same linear objective function, the method whose LP relaxation attains the highest objective value provides the tightest convex hull approximation of the integer feasible region.
	
	The results indicate that, for both Gen-1 and Gen-2, the proposed formulations yield tighter LP relaxations compared to IP-LA. In particular, IP-FB consistently dominates the other two methods. This property is advantageous, as LP relaxations are commonly used as heuristic approximations for solving computationally hard integer programs, and tighter relaxations generally lead to improved solution quality.
	\vspace{-0.2cm}
	\begin{table}[htb]
		\begin{center}
			\caption{Average objective value in Exp-3 for Dens=0.4}\label{table:exp3-dens4-lprelax}
			\begin{tabular}{|c|c|c|c|c|c|c|c|c|}
				\hline
				\multirow{2}{*}{Nodes} & \multicolumn{4}{|c|}{Gen-1} & \multicolumn{4}{|c|}{Gen-2} \\
				\cline{2-9}
				& OPT & FB & CB & LA & OPT & FB & CB & LA \\
				\hline
				8  & 973 & 964 & 960 & 948 & 987 & 981 & 977 & 969 \\
				9  & 924 & 913 & 907 & 889 & 952 & 942 & 933 & 906 \\
				10 & 871 & 852 & 834 & 789 & 898 & 875 & 861 & 820 \\
				11 & 836 & 806 & 785 & 725 & 873 & 847 & 825 & 764 \\
				12 & 807 & 772 & 746 & 676 & 844 & 810 & 783 & 708 \\
				13 & 790 & 745 & 713 & 632 & 837 & 802 & 769 & 682 \\
				14 & 784 & 732 & 699 & 606 & 837 & 802 & 769 & 682 \\
				15 & 746 & 695 & 661 & 569 & 788 & 742 & 705 & 603 \\
				16 & 746 & 689 & 651 & 550 & 782 & 731 & 691 & 584 \\
				\hline
			\end{tabular}
		\end{center}
	\end{table}
	\vspace{-0.2cm}
	
	In addition, Figure~\ref{fig:exp2-lprelax} reports the average ratio of the LP relaxation objective value to the optimal integer objective value. A higher ratio indicates a tighter feasible region. The results show that the LP relaxations of the proposed formulations achieve at least approximately 90 percent of the optimal objective value in this experiment, whereas the LP relaxation of IP-LA attains only about 75 percent.
	\vspace{-0.2cm}
	\begin{figure}[htb]
		\begin{center}
			\subfigure[Gen-1]{\includegraphics[width=0.23\textwidth]{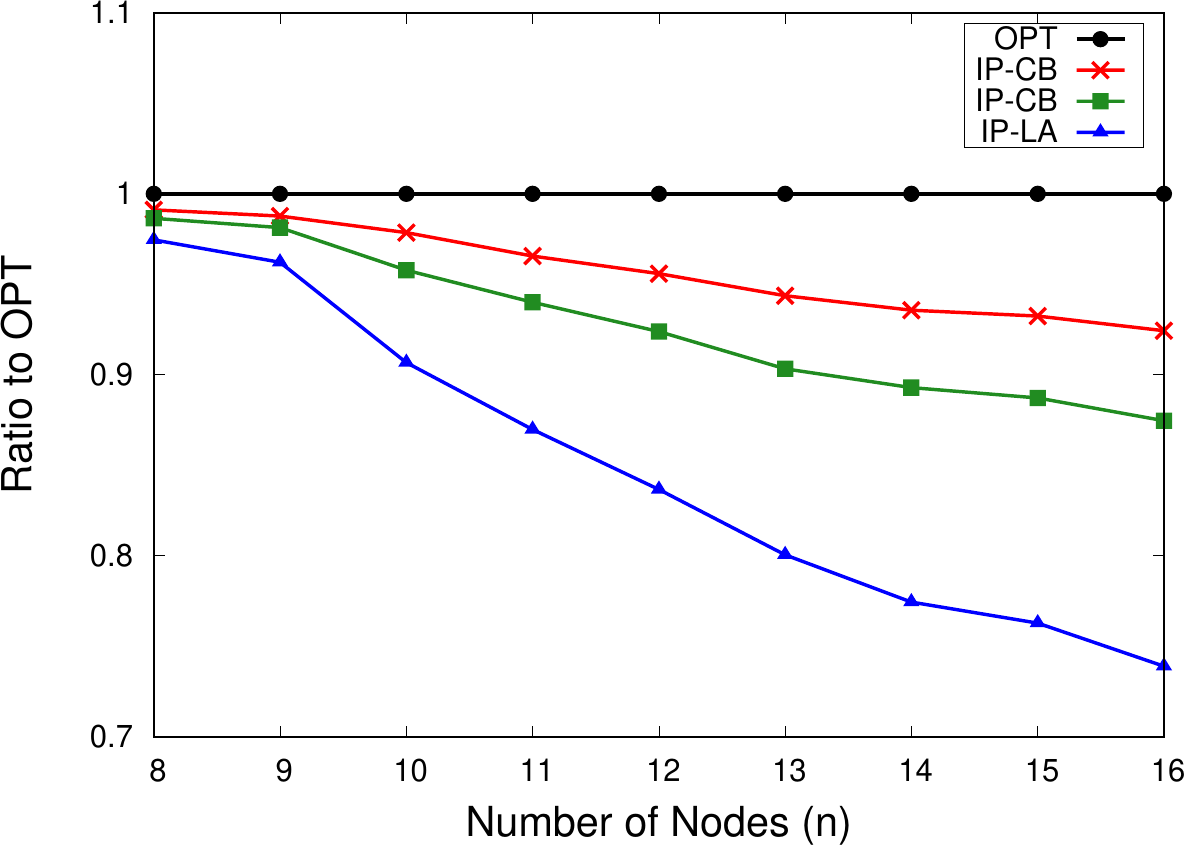}}
			\subfigure[Gen-2]{\includegraphics[width=0.23\textwidth]{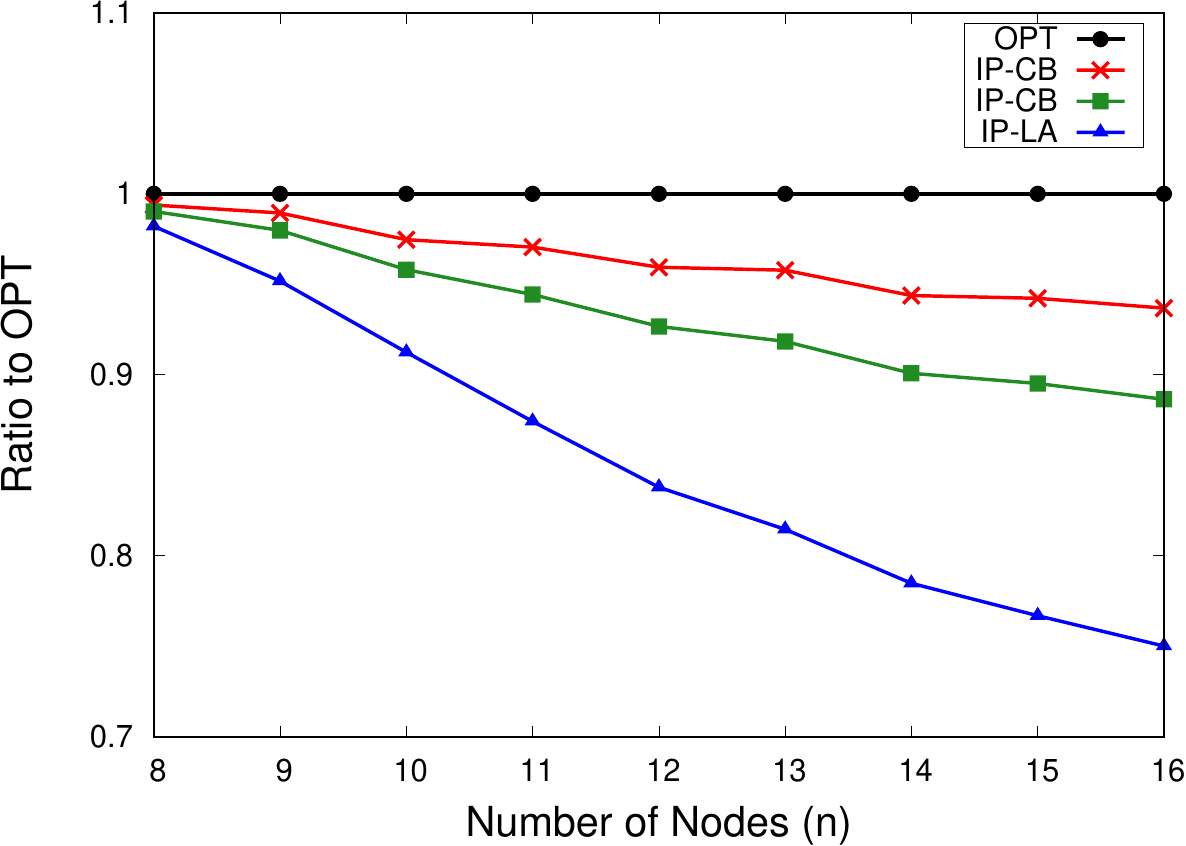}}
		\end{center}
		\caption{Average ratio of LP relaxations' obejctive value over the optimal solution of the corresponding IPs with $N=10$ and $dens=0.4$}\label{fig:exp2-lprelax}
	\end{figure}
	
	\subsection{Results of Large Instances}
	
	
	Exp-4 is designed to evaluate the effect of increasing the number of scenarios on large-scale instances solving by IT-FB. Results provided in Tables~\ref{table:exp3-gen1} and \ref{table:exp3-gen2} illustrate an acceptable solution time for large instances even though the solution time rises by increasing the number of scenarios. It must be noticed that instances with 100 nodes are categorized as large instances for robust version, while these are not considered as large instances in the related literature.
	
	However, unlike the results in the other experiments, here Gen-1 generates harder (in terms of solution time) instances for graphs with density of 0.4 and 0.8. In this case, the average of the most time-consuming size of instances remains less than one minute for $n=N=100$ with Dens=0.4. Moreover, increasing the density of the graph leads to less NDC pair of nodes and consequently easier instances.
	\vspace{-0.2cm}
	\begin{table}[htbp]
		\begin{center}
			\caption{Exp-4, results of Gen-1}\label{table:exp3-gen1}
			\begin{tabular}{|c|c|c|c|c|c|c|}
				\hline
				& \multicolumn{2}{|c|}{Dens=0.4}& \multicolumn{2}{|c|}{Dens=0.6}& \multicolumn{2}{|c|}{Dens=0.8} \\
				\cline{2-7}
				Scenaris & Cost & Time & Cost & Time & Cost & Time \\
				\hline
				10  & 1072.16 & 12.70 & 741.20 & 5.894 & 532.41 & 2.734 \\
				20  & 1108.85 & 17.96 & 778.48 & 8.904 & 556.29 & 3.938 \\
				30  & 1126.43 & 22.58 & 796.27 & 11.68 & 567.49 & 4.862 \\
				40  & 1140.35 & 27.40 & 809.14 & 14.93 & 575.45 & 5.882 \\
				50  & 1149.15 & 32.89 & 816.71 & 17.39 & 580.84 & 6.566 \\
				60  & 1157.42 & 39.82 & 824.51 & 20.72 & 584.95 & 7.458 \\
				70  & 1163.65 & 43.13 & 829.51 & 22.85 & 588.40 & 8.330 \\
				80  & 1168.38 & 49.45 & 833.38 & 25.73 & 591.19 & 8.991 \\
				90  & 1171.92 & 52.34 & 837.56 & 27.75 & 592.75 & 9.830 \\
				100 & 1175.03 & 56.93 & 841.06 & 33.37 & 594.85 & 10.45 \\
				\hline
			\end{tabular}
		\end{center}
	\end{table}
	\vspace{-0.7cm}
	\begin{table}[htbp]
		\begin{center}
			\caption{Exp-4, results of Gen-2}\label{table:exp3-gen2}
			\begin{tabular}{|c|c|c|c|c|c|c|}
				\hline
				& \multicolumn{2}{|c|}{Dens=0.4}& \multicolumn{2}{|c|}{Dens=0.6}& \multicolumn{2}{|c|}{Dens=0.8} \\
				\cline{2-7}
				Scenarios & Cost & Time & Cost & Time & Cost & Time \\
				\hline
				10  & 1333.23 & 15.26 & 947.78 & 13.89 & 640.64 & 2.381 \\
				20  & 1341.22 & 17.79 & 955.52 & 18.21 & 645.45 & 3.216 \\
				30  & 1344.28 & 19.37 & 959.48 & 21.12 & 648.64 & 3.731 \\
				40  & 1347.07 & 21.17 & 962.32 & 25.58 & 650.47 & 4.256 \\
				50  & 1349.09 & 23.26 & 963.96 & 28.41 & 651.32 & 4.822 \\
				60  & 1350.72 & 24.81 & 965.25 & 30.76 & 652.19 & 5.575 \\
				70  & 1351.38 & 27.31 & 966.54 & 33.95 & 652.99 & 6.187 \\
				80  & 1352.73 & 29.48 & 967.37 & 37.34 & 653.59 & 6.845 \\
				90  & 1353.37 & 30.51 & 968.01 & 39.53 & 654.15 & 7.461 \\
				100 & 1354.31 & 32.45 & 968.75 & 43.54 & 654.32 & 8.060 \\
				\hline
			\end{tabular}
		\end{center}
	\end{table}
	
	Exp-5 is devoted to the large instances to assess the impact of variation in the number of nodes. This experiment shows the highest solution times, in particular for Gen-2. In Tables~\ref{table:exp4-dens4}, \ref{table:exp4-dens6} and \ref{table:exp4-dens8} we added a new column to all considered graph densities, labeled as OPT. This column shows how many instances have been solved to optimality (out of 100 instances) within a 1200-second time limit.
	
	As shown in the results, all instance sizes are solved to optimality, except for the instances generated using Gen-2 with $n = 130$, $140$, and $150$. For these cases, only $96$, $92$, and $68$ instances, respectively, are solved to optimality. Accordingly, for these instance sets, we report only the average costs and solution times computed over the instances that reach optimality. For all other cases, since every instance is solved optimally, the reported averages are computed over the complete set of instances.
	\vspace{-0.2cm}
	\begin{table}[htbp]
		\begin{center}
			\caption{Exp-5, results for Dens=0.4}\label{table:exp4-dens4}
			\begin{tabular}{|r|c|r|c|c|r|c|}
				\hline
				& \multicolumn{3}{|c|}{Gen-1}& \multicolumn{3}{|c|}{Gen-2} \\
				\cline{2-7}
				Nodes & Cost & Time & OPT & Cost & Time  & OPT \\
				\hline
				50	& 1050 & 0.4	& 100 & 1203 & 0.6	 	& 100 \\
				60	& 1071 & 2.1	& 100 & 1273 & 2.5	 	& 100 \\
				70	& 1104 & 4.9	& 100 & 1311 & 5.2	 	& 100 \\
				80	& 1098 & 7.6	& 100 & 1334 & 8.7	 	& 100 \\
				90	& 1089 & 8.4	& 100 & 1341 & 12.6	 	& 100 \\
				100	& 1092 & 15.9	& 100 & 1341 & 17.7	 	& 100 \\
				110	& 1124 & 34.6	& 100 & 1345 & 32.3	 	& 100 \\
				120	& 1126 & 68.6	& 100 & 1347 & 59.3		& 100 \\
				130	& 1148 & 199.8	& 100 & 1369 & 228.9	& 96	\\
				140	& 1155 & 339.6	& 100 & 1406 & 531.5	& 92 \\
				150	& 1137 & 348.5	& 100 & 1466 & 888.1	& 68 \\
				\hline
			\end{tabular}
		\end{center}
	\end{table}
	\vspace{-0.7cm}
	\begin{table}[htbp]
		\begin{center}
			\caption{Exp-5, results for Dens=0.6}\label{table:exp4-dens6}
			\begin{tabular}{|r|c|r|c|c|r|c|}
				\hline
				& \multicolumn{3}{|c|}{Gen-1}& \multicolumn{3}{|c|}{Gen-2} \\
				\cline{2-7}
				Nodes & Cost & Time & OPT & Cost & Time  & OPT \\
				\hline
				50	& 662 & 0.1	& 100 & 795 & 0.2	& 100 \\
				60	& 772 & 1.8	& 100 & 869 & 1.5	& 100 \\
				70	& 772 & 3.1	& 100 & 916 & 3.6	& 100 \\
				80	& 773 & 4.8	& 100 & 950 & 7.1	& 100 \\
				90	& 763 & 5.1	& 100 & 953 & 11.1	& 100 \\
				100	& 773 & 7.3	& 100 & 955 & 18.1	& 100 \\
				110	& 791 & 14.3	& 100 & 956 & 28.4	& 100 \\
				120	& 781 & 16.1	& 100 & 956 & 41.7	& 100 \\
				130	& 783 & 29.5	& 100 & 957 & 52.4	& 100 \\
				140	& 781 & 33.5	& 100 & 957 & 66.1	& 100 \\
				150	& 790 & 88.1	& 100 & 959 & 86.1	& 100 \\
				\hline
			\end{tabular}
		\end{center}
	\end{table}
	\vspace{-0.7cm}
	\begin{table}[htbp]
		\begin{center}
			\caption{Exp-5, results for Dens=0.8}\label{table:exp4-dens8}
			\begin{tabular}{|r|c|r|c|c|r|c|}
				\hline
				& \multicolumn{3}{|c|}{Gen-1}& \multicolumn{3}{|c|}{Gen-2} \\
				\cline{2-7}
				Nodes & Cost & Time & OPT & Cost & Time  & OPT \\
				\hline
				50	& 507 & 0.1	& 100 & 580 & 0.1	& 100 \\
				60	& 504 & 0.1	& 100 & 580 & 0.1	& 100 \\
				70	& 532 & 0.4	& 100 & 583 & 0.1	& 100 \\
				80	& 523 & 0.3	& 100 & 587 & 0.3	& 100 \\
				90	& 510 & 2.1	& 100 & 613 & 1.2	& 100 \\
				100	& 536 & 0.9	& 100 & 645 & 2.9	& 100 \\
				110	& 530 & 3.9	& 100 & 688 & 5.5	& 100 \\
				120	& 516 & 3.3	& 100 & 721 & 9.8	& 100 \\
				130	& 608 & 12.7	& 100 & 735 & 13.3	& 100 \\
				140	& 614 & 17.9	& 100 & 759 & 20.4	& 100 \\
				150	& 612 & 24.9	& 100 & 754 & 25.3	& 100 \\
				\hline
			\end{tabular}
		\end{center}
	\end{table}

	\section{Conclusion}\label{sec:conclusion}
	
	Advance technologies in the field of networks and its service management has positively enhanced interaction between businesses. In particular, the emergence of digital platforms, cloud computing and data-driven technologies have significantly been shaping the world where nearly all parts of businesses depend on a reliable and resilient network infrastructure. Therefore, survivability is turned into one of the most important phases of modern network design. It essentially includes strategies to ensure that networks can survive during and after certain critical conditions.
	
	In this paper, we propose a new type of survivable network design problem, where the network is affected with two types of uncertainty sets. The first uncertainty set is in the form of a discrete uncertainty set and only affects the nodes of the network and thus the objective function of the problem. The second set is the discrete version of the budgeted interdiction uncertainty set and has made the problem structure uncertain by failing a subset of the edges. In addition, we formulate the problem mathematically using both the flow-based and cut-based idea and also introduce an iterative approach to effectively solve the large instances.
	
	In further research, it will be interesting to explore the field of RLP by considering the Min-Max Regret criterion under interval uncertainty for the edge lengths that combine the idea with the influence maximization algorithms to enhance the regenerator placement. Alternatively, it is interesting to define a budgeted uncertainty for the edge length and add machine learning techniques to approximate the possible placements of the regenerators.

	\appendices
	
	\section{Extension of FRE}\label{app:extension-fre}
	We now extend the FRE constraint to the case where $\Gamma \geq 2$. Consider a graph $G$ and its corresponding transformed graph $M$. The solutions of the FTRLP and the FTMCDSP on $G$ and $M$ are said to be equivalent if, for any node $v \in \overline{V}$, the number of neighbors of $v$ in $M$ that are equipped with regenerators is at least equal to the number of regenerator-equipped nodes that lie on all shortest paths in $G$ inducing edges incident to $v$ in $M$, plus an additional $\Gamma$ nodes. This condition ensures that the required level of fault tolerance is preserved in the transformed graph by guaranteeing sufficient redundancy in regenerator placement. To prove that this extension is valid, we can use the same technique used in proof of Theorem~\ref{thm:fre}.
	
	\section{Extension of Models}\label{app:extension}
	
	In this section, we show how our flow-based IP and iterative method can be extended to cases with $\Gamma \geq 2$. Based on the information provided in Section~\ref{sec:robust-counterpart}, it is clear that for cases with $\Gamma \geq 2$ there must exist $\Gamma+1$ edge-disjoint paths between any pair of nodes in the network. Now, we can show how to adjust the IP-FB and consequently the IT-FB to ensure $\Gamma+1$ between any pair of nodes. To this end, similar to the method shown in Section~\ref{subsec:flow-model}, assume we have paths $f^1, f^2, \ldots, f^{\Gamma+1}$. Therefore, for any $\gamma \in [\Gamma+1]$ we have the following constraints:
	\vspace{-0.2cm}
	
	\footnotesize
	\begin{align*}
	&\sum_{(p,j)\in E} f^{\gamma}_{pj,pq} - \sum_{(j,p)\in E} f^{\gamma}_{jp,pq} = x_p x_q & \forall p,q\in V \\
	&\sum_{(q,j)\in E} f^{\gamma}_{qj,pq} - \sum_{(j,q)\in E} f^{\gamma}_{jq,pq} = -x_p x_q & \forall p,q\in V \\
	&\sum_{(i,j)\in E} f^{\gamma}_{ij,pq} - \sum_{(j,i)\in E} f^{\gamma}_{ji,pq} = 0 & \forall p,q\in V,\; \forall i\in V\backslash(p,q)
	\end{align*}
	\normalsize
	Then, to ensure any node is connected to at least $\Gamma+1$ nodes of $L$, which can be formulated in the form of the following constraint.
	\vspace{-0.2cm}
	
	\footnotesize
	\begin{align*}
	& \quad \textstyle\sum_{j\in\mathcal{N}_i} x_j \ge \Gamma+1, && \forall i \in V
	\end{align*}
	\normalsize
	Moreover, it is necessary that paths $f^1, f^2, \ldots, f^{\Gamma+1}$ are edge-disjoint, thus we have:
	\vspace{-0.2cm}
	
	\footnotesize
	\begin{align*}
	& \quad f^1_{ij,pq} + f^1_{ij,pq} + \ldots f^{\Gamma+1}_{ij,pq} \le 1, && \forall p,q\in V,\; \forall (i,j)\in E
	\end{align*}
	\normalsize
	Finally, we add the next constraint to satisfy the FRE criterion:
	\vspace{-0.2cm}

	\footnotesize
	\begin{align*}
	& \quad \textstyle\sum_{j\in \mathcal{N}_i} x_j \geq \sum_{k\in\mathcal{N}_i^{\prime}} x_k +\Gamma  & \forall i\in \overline{V}
	\end{align*}
	\normalsize
	As for the IT-FB, we adjust the corresponding constraints in both master problem and subproblem of the iterative heuristic.

\end{document}